\newtheorem{theorem}{Theorem}
\newtheorem{lemma}{Lemma}
\def\Z{\mathbb Z}
\def\BState{\State\hskip-\ALG@thistlm}
\title{The Bounded Beam Search algorithm for the Block Relocation Problem}
\author[1]{Tiziano Bacci}
\author[1]{Sara Mattia}
\author[1]{Paolo Ventura}
\affil[1]{IASI - CNR, Via dei Taurini 19, 00185 Roma, Italia \texttt{\{tiziano.bacci,sara.mattia,paolo.ventura\}@iasi.cnr.it}}
\date{\today}
\begin{document}

\maketitle

\begin{center}
\fbox{\begin{minipage}{35em}
This is the postprint version of the article
``Bacci, T. , Mattia, S. , Ventura, P. (2019). The bounded beam search algorithm for the block relocation
problem. Computers \& Operations Research, 103 , 252–264 .''
The published version is available at https://doi.org/10.1016/j.cor.2018.11.008.
\end{minipage}}
\end{center}

\begin{abstract}
\noindent 
In this paper we deal with the restricted Block Relocation Problem. We
present a new lower bound and a heuristic approach for the problem. 
The proposed lower bound can be computed in polynomial time and
it is provably better than some previously known lower bounds.
We use it within a bounded beam search algorithm to solve the Block Relocation Problem
and show that the considered heuristic approach outperforms the other existing
algorithms on most of the instances in the literature. 
In order to test the approaches on real-size dimensions,  
new large instances of the Block Relocation Problem are also introduced.\\

\noindent \textbf{Keywords:} Block Relocation, Container Relocation, Beam Search, Lower Bound, Heuristics, Realistic Instances. 

\end{abstract}

\section{Introduction}
\label{intro}

Let $\cal S$ be a system (yard) defined by $w$ stacks of capacity (in terms of available slots/tiers) $h$ and
let $\{1,\dots,n\}$ be a set of $n$ blocks located in the slots of the $w$ stacks.
A {\em reshuffle operation} (or simply a {\em reshuffle} or a {\em relocation}) is a movement of a block from a stack to another,
while a {\em retrieval} is a movement of a block from a stack to the outside of the system.
The stacks can store blocks according to a last-in / first-out policy.
In a stack, only the topmost block is accessible and, when a block has to be retrieved,
all the blocks above it have to be reshuffled. 
When a block is reshuffled and moved in one of the other stacks of the yard,
it has to be allocated in the first slot available from the bottom to the top.

The Block Relocation Problem (BRP) consists in deciding where to reallocate every block that is moved by a reshuffle operation,
in order to minimize the total number of reshuffles needed to retrieve all the blocks
according to the retrieval order (1,...,n).
Observe that such minimum is lower bounded by the number of {\em blocking} blocks
of $\cal S$, i.e., those located in any slot above a block with higher retrieval priority.
Figure \ref{exBRP} gives an example of the BRP with a system defined by $w = 3$ stacks, $h = 3$ available slots for each stack,
and $n = 6$ blocks. Starting from the initial yard, 
where blocks 5 and 6 are blocking, the sequence of movements of an optimal solution is reported.
At each step, the next block to be moved with a reshuffle or a retrieval operation is highlighted in gray.
The minimum number of reshuffles required is three: block 6 is reshuffled in order to retrieve block 1; block 5 is reshuffled to retrieve block 2; we then reshuffle block 6 to retrieve block 4.
Since BRP generalizes the Bounded Coloring Problem (also known as Mutual Exclusion Scheduling)
on permutation graphs, it is known to be NP-hard for any fixed $h \geq 6$. 
For some complexity results on BRP, Bounded Coloring and related problems,
see  \citet{CSV2012}, \citet{J2003}, \citet{BMO2011}, and \citet{BMV2017}. 
 
A real world application of the BRP arises in the logistics of containers in a terminal. A container terminal is an area where containers are stored and transshipped between different transport vehicles, such as cargo ships, trains, trucks, and where they are stacked because of the limited storage space.
The storage area (yard) is often divided into groups of stacks of containers,
called bays, and containers are moved by yard cranes. The movements of containers may occur  within the same bay or between different bays (\citet{LL2010}). Typically, a container yard stores at the same time thousands of containers grouped into hundreds of stacks with a storage capacity which may be up to 10 slots (\citet{GK2005}). Since a stack is accessible only from the top, when a container is required outside of the storage area, any container located above it has to be moved to another stack by a yard crane with a reshuffle operation. Reshuffle operations are time-consuming and they have to be avoided as much as possible. In this scenario, the Block Relocation Problem consists in finding a way to retrieve, in a given order, all the containers in the container yard so to minimize the number of reshuffle operations.
Throughout the paper, the words container and block will be used interchangeably.

In the literature, two variants of the BRP are studied: {\em restricted} and {\em unrestricted}.
In the restricted version, it is allowed to reshuffle only blocks located above the next one to be retrieved,
while, in the unrestricted case, any block can be reshuffled.
In this work, we focus on the first variant.
Readers interested in the unrestricted version can refer to
\citet{FB2012}, \citet{PH2013}, \citet{TM2015}, and \citet{TSB2018}.

The best exact approaches proposed for the restricted BRP allow to solve instances with up to 70 containers (\citet{ZQLZ2012}
and \citet{TT2016}), that is, in most cases, far away from the dimension of a real world scenario (\citet{LL2010}). 
Therefore, a huge amount of heuristics have been proposed in the literature.
Here we describe a new beam search approach, the Bounded Beam Search algorithm, 
and we compare it with the other methods in the literature, showing that it outperforms most of the existing approaches.
In the Bounded Beam Search algorithm we make use of a new lower bound for BRP, denoted as the Unordered Blocks Assignment Lower Bound.  
We compare it with other lower bounds in the literature and prove that it theoretically dominates the two most used ones. 
In addition, we introduce a new set of instances, large enough to represent real scenarios.

In Section \ref{literature}, we give a survey of the literature on BRP, focusing on the heuristic procedures (Section \ref{sec:heuralgo})
and on the lower bounds (Section \ref{sec:extlb}); some variants of the BRP are described in Section \ref{sec:relpr}.
In Section \ref{sec:newlowerbound}, we introduce the Unordered Blocks Assignment
Lower Bound.
In Section \ref{sec:beamserch}, we describe the Bounded Beam Search heuristic algorithm in details.
In Section \ref{instances}, we describe the test bed used in the computational experiments.
In Section \ref{helb}, we study how the behavior of the Bounded Beam Search
algorithm changes according to some parameters.
In Section \ref{sec:results}, we compare the performances of the Bounded Beam
Search procedure with the best heuristic methods taken from the literature on
different sets of instances.
Finally, in Section \ref{conclusions}, we present our comments and conclusions.


\begin{figure}[htbp] 
  \begin{center}
\includegraphics[width=15cm,height=6cm]{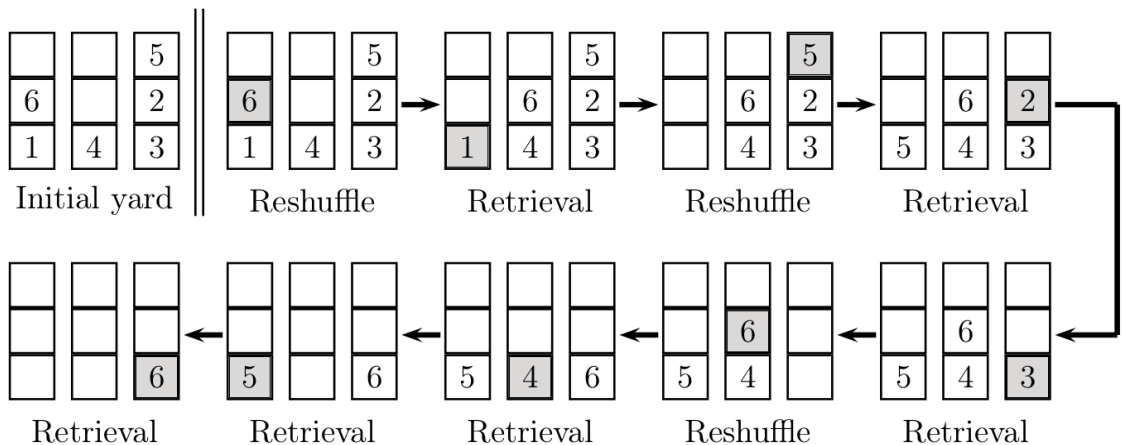}
    \end{center}
\caption{A representation of an optimum solution for the Block Relocation Problem.}
\label{exBRP}
\end{figure}


\section{Literature review}
\label{literature}

A huge amount of literature concerning storage yard operations in container terminals
has been proposed in the last two decades, as described in the recent surveys by
\citet{CVR2014} and \citet{LK2014}. Below we review the literature for the BRP and
some of its variants, focusing on heuristic approaches and lower bounds for the problem.

\subsection{Exact approaches}

The exact approaches for the problem can be distinguished into methods based on
Integer Linear Programming (ILP) formulations (\citet{CSV2012}, \citet{WLT2009},
\citet{ZF2014}, \citet{ZCFSV2015}) and search-based methods
(\citet{KH2006}, \citet{WT2010}, \citet{UA2012},  \citet{ZQLZ2012}, \citet{IBV2014},
\citet{TT2016}, \citet{IBV2015}, \citet{KA2016b}).
ILP based methods can solve only instances up to 30 blocks, while search-based
exact algorithms can tackle instances with up to 70 blocks (see \citet{TT2016}). 

\subsection{Heuristic algorithms}
\label{sec:heuralgo}

Many heuristic procedures have been proposed for the BRP.
These algorithms can be grouped into {\em fast} or {\em slow} methods.
In the first group we include all the algorithms 
based on a simple greedy approach, while the second group contains procedures that are more
sophisticated or structured.
The fast approaches include the following algorithms: 
the Lowest Position heuristic (\citet{Z2000}); 
the Reshuffle Index heuristic (\citet{MLTLLC2005});
the heuristic rule based on the Expected Number of Additional Relocations (\citet{KH2006});
the Extended Lowest Position heuristic, the Extended Reshuffle Index heuristic,
the Extended heuristic based on the Expected Number of Additional Relocations (\citet{WLT2009});
the Reshuffle Index with Look-ahead heuristic (\citet{WT2010});
the heuristic procedure proposed in \citet{CSV2012};
the Lowest Absolute Difference heuristic, the Group Assignment Heuristic by \citet{WT2012};
the Greedy1 heuristic, the Difference1 heuristic (\citet{UA2012});
the Probe Restricted 1 heuristic, the Probe Restricted 2 heuristic, the Probe Restricted 3 heuristic, the Probe Restricted 4 heuristic (\citet{ZQLZ2012});
the Chain heuristic, the ChainF heuristic (\citet{JV2014}).
Below we briefly describe such fast procedures. 

In the Lowest Position heuristic the next block that has to be reshuffled is assigned to the stack that has the highest number of empty slots.
The Reshuffle Index heuristic assigns a reshuffled block to the stack with the minimum number of containers that have an higher priority (a lower index). 
The heuristic rule based on the Expected Number of Additional Relocations assigns the next block to be reshuffled to the stack with the minimum expected number of additional relocations. 
The Extended Lowest Position heuristic, the Extended Reshuffle Index heuristic, the Extended heuristic based on the Expected Number of Additional Relocations  are modified (refined) versions of the Lowest Position heuristic, the Reshuffle Index heuristic and the heuristic rule based on the Expected Number of Additional Relocations, respectively.
In particular, the novelty consists of assigning each reshuffled block to the stack that minimizes an
upper bound computed using the original heuristic.
the Reshuffle Index with Look-ahead heuristic is also a development of the Reshuffle Index procedure and they differ in the rule for breaking ties.
In the heuristic procedure proposed in \citet{CSV2012} a block $i$ is assigned to the stack $s = arg \min \{\sigma_s| \sigma_s > i\}$;
otherwise to the stack $s = arg \max \{\sigma_s\}$.
Here and in the following, $\sigma_s$ is the block with the minimum index among those allocated in stack $s$. 
We note that this procedures, the Lowest Absolute Difference heuristic and the Probe
Restricted 3 heuristic correspond to the same algorithm.
To move the reshuffled container $i$ in a stack $s$, Difference1 considers (in this order) the following
three criteria:
i) $s = arg \min \{\sigma_s | \sigma_s > i\}$;
ii) $s = arg \min \{k_s - i | k_s < i\}$, where $k_s$ is the topmost block of $s$;
iii) $s = arg \min \{k_s\}$.
The Probe Restricted 1 heuristic and the Lowest Position procedure differ only in the rule adopted for breaking ties.
In the Probe Restricted 2 heuristic, the next block to be reshuffled is assigned to the stack that maximizes $\sigma_s$.
As in Difference1, also in the Probe Restricted 4 heuristic the stack $s$ is selected
iteratively applying different criteria on $\sigma_s$. 
Differently from the approaches described so far, the Group Assignment Heuristic, ChainF, Chain, and Greedy1 consider, at each step,
not only the next block that has to be reshuffled, but a subset of the blocking blocks $Q$
above the next one to be retrieved. 
Chain and ChainF iteratively reshuffle the block of $Q$ in consecutive pairs, whereas the Group Assignment Heuristic and
Greedy1, assign all the blocks in $Q$ at the same time.  

The slow approaches include the following algorithms:
the Minimum Reshuffle Integer Program heuristic (\citet{WLT2009});
the Tabu Search heuristic by \citet{WTH2010}; 
the Beam Search heuristic by \citet{WT2010};
the Corridor Method heuristic (\citet{CVS2011});
the Matrix-Algorithm heuristic (\citet{CSV2009});
the three phase-heuristic by \citet{LL2010};
the iterative deepening A* restricted heuristic (\citet{ZQLZ2012});
the depth-first branch \& bound heuristic (\citet{KA2016b}).

The Minimum Reshuffle Integer Program heuristic is based on an Integer Linear Programming formulation, where a parameter $k$ indicates that such model minimizes the total number of reshuffles in retrieving the first $k$ containers. 
The Tabu Search heuristic implements a tabu search approach.
The Beam Search heuristic is a beam search algorithm that uses the Reshuffle Index heuristic to compute the upper bounds. 
The basic idea of Corridor Method is to use an exact method over restricted portions of the solution space in order to find good feasible solutions. 
\citet{ZQLZ2012} noted that it may provide infeasible solutions.
The Matrix-Algorithm explores all the configurations obtained by performing one reshuffle and by assigning to each child a score computed by a fast heuristic. 
Then, it randomly selects the next node to be explored. 
Its performances benefit from a smart binary encoding of the stacking area. 
The three phase-heuristic operates according to three phases: initially, a first feasible solution is obtained by a greedy procedure; then, an ILP formulation is used to try reducing the number of reshuffles; finally, another ILP is used to reduce the total time. 
The iterative deepening A* restricted heuristic is an implementation of the iterative deepening A* approach, applied to the restricted BRP. It is an exact method, that can be turned in a heuristic one, by setting a time limit.  
The depth-first branch \& bound heuristic is obtained by fixing a time limit to an exact search based algorithm, which uses an abstraction method to reduce the solution space. 

\subsection{Lower bounds}
\label{sec:extlb} 

There are few contributions in the literature related to lower bounds for the restricted BRP. \citet{KH2006} introduce a lower bound (referred in the literature as LB1, see \citet{ZQLZ2012}) which derives from the simple observation that each blocking container has to be reshuffled at least once. 

\citet{ZQLZ2012} present a lower bound (LB3) that can be calculated according to
the following iterative procedure.
Set $Q = 0$. 
Then, for $i=1, \ldots, n$, consider container $i$ that has to be retrieved and
let $B^i$ be the set of containers blocking $i$.
For each of these containers that will necessarily be blocking after its
relocation, the variable $Q$ is increased by 1.
The yard at iteration $i+1$ is obtained by removing all the blocks in
$B^i \cup \{i\}$. Then, the value of LB3 is LB1 + $Q$.

The lower bound LB4, introduced by \citet{TT2016}, is a refinement of LB3. 
However, while LB1 and LB3 can be calculated in polynomial time,
the algorithm for computing LB4 is exponential in the number of blocks. 

\subsection{Related problems}
\label{sec:relpr} 

The {\em Container Pre-Marshalling Problem} (CPMP, \citet{LH2007}) has several common aspects with the BRP. As in the latter, in the CPMP a set of containers, each of them having a retrieval priority, are located in a yard composed by a set of stacks with a given maximum height. The difference with BRP is that the blocks are not removed
from the yard and only reshuffle operations are allowed. Here one has to reorder the blocks of the initial configuration so to obtain a yard without blocking blocks, with
a minimum number of reshuffles. The CPMP has been largely studied in the literature and several heuristic algorithms and exact approaches have been presented (\citet{HT2016,TT2018,JTV2017}).
Another variant of BRP is the {\em Blocks Relocation Problem with Batch Moves} where more than one container at a time can be retrieved from the yard or moved between two stacks (\citet{ZLK2016}).  

Several robust variants of BRP are also studied: they include the {\em Online Container Relocation Problem} (\citet{ZFJ2017}), in which the retrieval order is known only for a subset of the $n$ containers allocated in the initial configuration and the {\em Container relocation problem with time windows} (\citet{KA2016}), where the containers are divided into groups and it is given a retrieval order among the groups, but not among the blocks belonging to the same group.

\section{A new lower bound for the BRP}
\label{sec:newlowerbound}

In this Section, we present a new lower bound, called Unordered Blocks Assignment
Lower Bound (UBALB), for the restricted BRP.
A preliminary version of this contribution can be found in \citet{BMV2018}.

In the following we show that such a lower bound can be computed by iteratively
solving a relaxation of the Generalized Blocking Items Problem (GMBIP) defined
below.
Given a yard $M \in {\Z}^{w \times h}$ composed by $w$ stacks of height $h$, let $M(j,k)$ be the block located in the $k$-th slot of stack $j$
($M(j,k) = 0$ if the position is empty). Suppose to have a set $B$ of $m$ blocks that have to be allocated in the yard, according to a given order $\phi$ (denote by $\phi(i)$ the $i$-th block to be located) and let
$\overline M$ be the yard obtained from $M$ after placing these blocks. 
$\overline M$ is said to be $\phi$-{\em feasible} if it is compatible with
$\phi$, i.e., for each couple $b, b' \in B$ with $\phi^{-1}(b) < \phi^{-1}(b')$,
$b$ is not located above $b'$.
Furthermore, a block $r \in B$ is \emph{$r'$-blocking} if it is located above
block $r' < r$ and it is \emph{blocking} if it is $r'$-blocking for some $r'$.
Given an input instance defined by ($M$, $B$, $\phi$), the GMBIP consists of
finding a $\phi$-feasible $\overline M$ that minimizes the total number of
blocking blocks. The optimal value is denoted by $G^\ast (M, B, \phi)$.

\begin{lemma} \label{lemma:GMBIP}
GMBIP is NP-hard. 
\end{lemma}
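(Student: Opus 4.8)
The plan is to establish NP-hardness by a reduction from the Bounded Coloring Problem (Mutual Exclusion Scheduling) on permutation graphs, which---as recalled in Section~\ref{intro}---is NP-hard even when the bound on the size of each color class is a fixed constant. The starting point is a purely structural observation about the objective of GMBIP in the special case of an \emph{empty} initial yard. Suppose $M$ is the zero matrix and the placement order $\phi$ is the identity, so that the block placed $i$-th carries a prescribed retrieval index $\pi(i)$, where $\pi$ is a permutation of $\{1,\dots,m\}$. Because blocks are stacked sequentially onto stack tops, every sequential assignment is automatically $\phi$-feasible: a block placed later sits above those placed earlier in the same stack, so $b$ is never above $b'$ when $\phi^{-1}(b)<\phi^{-1}(b')$. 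Moreover, the block placed $j$-th is \emph{blocking} precisely when some earlier block in its stack has a smaller index; equivalently, the non-blocking blocks of a stack are exactly the prefix minima of the index sequence read in placement order, and a stack contributes zero blocking blocks if and only if the indices of its blocks form a strictly decreasing subsequence of $\pi$.

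First I would fix the construction. Given a permutation graph together with a class-size bound $k$ and a number of colors $c$, I build the GMBIP instance with empty yard $M$, with $w=c$ stacks of height $h=k$, and with $B$ the set of $m$ blocks whose placement order and retrieval indices encode $\pi$ (taking, if needed, the reverse permutation, so that decreasing subsequences of $\pi$ correspond to the independent sets of the given permutation graph; the complement of a permutation graph is again a permutation graph, so this passage is harmless). I also take $wh\ge m$ so that every block can be placed.

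Then I would prove the key equivalence: a $\phi$-feasible $\overline M$ has no blocking block if and only if its stack assignment partitions the $m$ blocks into at most $w$ strictly decreasing subsequences of $\pi$, each of length at most $h$. By the structural observation above this is immediate in both directions, and such a partition is exactly a proper coloring of the associated permutation graph using $c$ colors with every color class of size at most $k$. Consequently $G^\ast(M,B,\phi)=0$ if and only if the Bounded Coloring instance is feasible, so any algorithm solving GMBIP would decide Bounded Coloring on permutation graphs, and the NP-hardness of GMBIP follows.

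The main obstacle I expect is not the reduction itself but pinning down the combinatorial dictionary with complete precision: verifying that the non-blocking blocks of a stack are exactly the prefix minima, choosing the orientation of $\pi$ (and hence whether decreasing subsequences match cliques or independent sets) so that it aligns with the standard formulation of Bounded Coloring, and confirming that the capacity bookkeeping ($w$ stacks, height $h$, and $wh\ge m$) matches the number of colors and the class-size bound. Once this dictionary is fixed, the zero-objective characterization does all the work, and the hardness transfers directly.
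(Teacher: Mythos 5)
Your proposal is correct, but it takes a more self-contained route than the paper. The paper's proof is a two-line containment argument: restricting GMBIP to an empty initial yard $M$ yields exactly the Minimum Blocking Items Problem (MBIP), whose NP-hardness is simply cited from \citet{BMV2017}. You also specialize to the empty-yard case (your construction sets $M$ to the zero matrix), but instead of invoking the known hardness of that special case you reconstruct it from scratch via an explicit reduction from Bounded Coloring on permutation graphs: the dictionary between $\phi$-feasible stackings with zero blocking blocks and partitions of the permutation into at most $w$ strictly decreasing subsequences of length at most $h$ is exactly right (the non-blocking blocks of a stack are indeed the strict prefix minima in placement order), the complementation step to align decreasing subsequences with independent sets is legitimate since permutation graphs are closed under complement, and deciding whether $G^\ast=0$ then decides Bounded Coloring, which is NP-complete on permutation graphs even for fixed class size $k\ge 6$. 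What the paper's version buys is brevity and a clean modular dependence on prior work; what yours buys is a complete, verifiable argument that does not require the reader to consult \citet{BMV2017}, and it additionally makes explicit that hardness already holds for fixed stack height $h\ge 6$. The only point worth tightening is the trivial capacity check $wh\ge m$ (instances violating it are infeasible colorings anyway), which you already flag.
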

\begin{proof}
It contains the Minimum Blocking Items Problem (MBIP) as special case. 
Indeed, in MBIP the initial yard $M$ is assumed to be empty. 
MBIP is known to be NP-hard (\citet{BMV2017}).
\end{proof}

The solution of GMBIP can be used to obtain a lower bound for BRP, as we explain in the following. 

Given a set $\{1,\dots,n\}$ of $n$ blocks located in a yard $M=M^0$, let $M^i$ be obtained from $M^{i-1}$ by removing the block $i$ located in stack $t^i$ and the set $B^i$ of $i$-blocking blocks (let $\phi^i$ be the order given by taking them from the top to the bottom), for each $i=1,\dots,n$. It is worth observing that: (i) $M^n$ is the empty yard; (ii) block $i$ may not be present in $M^{i-1}$; and (iii) each block in $B^i$ has to be reshuffled at time $i$. Furthermore, a block in $B^i$ may become blocking again after its reshuffle. The minimum number of such blocks is $G^*(\tilde M^{i}, B^i, \phi^i)$,
where $\tilde M^{i}$ is obtained from $M^{i}$ by removing stack $t^i$. 

A lower bound for the BRP instance defined by $M$ is then given by the following equation:

\begin{equation}
\label{eq:lb}
\sum\limits_{i=1}^{n} (|B^i| +  G^*(\tilde M^{i}, B^i, \phi^i))
\end{equation}

Since GMBIP is NP-hard, the lower bound given by equation \eqref{eq:lb} is hard to compute in general.
All the lower bounds for the BRP presented in the
literature are, indeed, derived from \eqref{eq:lb} substituting, at each iteration $i$,
$G^*(\tilde M^{i}, B^i, \phi^i)$ with some lower bound.

\citet{KH2006} compute LB1 by setting $G^*(M, B, \phi) = 0$.
For calculating LB3, \citet{ZQLZ2012} relax both the order $\phi$ and the capacity $h$ of the stacks. 
\citet{TT2016} only relax the capacity restriction on the stacks.
We denote the optimal values of the corresponding relaxed problems by $G^Z(M, B, \phi)$ and $G^T(M, B, \phi)$ respectively.
Note that the algorithm presented by \citet{ZQLZ2012} to calculate $G^Z(M, B, \phi)$ runs in $O(w+n)$, whereas the one by \citet{TT2016} for $G^T(M, B, \phi)$ requires $O((w + 2^n)log(w))$ steps.
In the following, we say that a certain lower bound A {\em dominates}
another lower bound B if, for every input instance, the value
of A is not less than the value of B.
Since $0 \leq G^Z(M, B, \phi) \leq G^T(M, B, \phi)$ it holds that: (i) LB3 dominates  LB1; (ii) LB4 dominates LB3.

Here we introduce a new lower bound based on a relaxation GMBIP$^B$ of GMBIP, 
where the final matrix ${\overline M}$ does not need to be $\phi$-feasible. 
Let $G^B(M, B, \phi)$ be the optimal value of GMBIP$^B$. 

\begin{theorem} \label{th:dominanza}
  The following holds: (i) the Unordered Blocks Assignment
Lower Bound dominates  LB3;
  (ii) neither LB4 dominates the Unordered Blocks Assignment
Lower Bound nor the opposite.
\end{theorem}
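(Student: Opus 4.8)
The plan is to reduce both parts to a comparison of the three relaxation values $G^Z$, $G^T$, $G^B$ that are substituted into the common formula \eqref{eq:lb}. The crucial preliminary observation is that the sequence of subinstances $(\tilde M^i, B^i, \phi^i)$ appearing in \eqref{eq:lb} is determined solely by the initial yard $M$ and the retrieval order, and is therefore identical for all three lower bounds. Consequently, a pointwise inequality between two of the $G$-values, holding for every input $(M,B,\phi)$, immediately yields the corresponding domination between the lower bounds after summing over $i$, since the $|B^i|$ terms are common to all of them.

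For part (i) I would argue purely by comparing feasible regions. GMBIP$^B$ is obtained from GMBIP by dropping the $\phi$-feasibility requirement while keeping the capacity bound $h$, whereas the relaxation defining LB3 drops both the $\phi$-feasibility requirement \emph{and} the capacity bound. Hence every $\overline M$ feasible for GMBIP$^B$ is also feasible for the LB3 relaxation, so the latter minimizes the number of blocking blocks over a larger set, giving $G^Z(M,B,\phi) \le G^B(M,B,\phi)$ for every instance. Summing this inequality term by term over \eqref{eq:lb} yields UBALB $\ge$ LB3.

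For part (ii), since $G^T$ retains $\phi$-feasibility but relaxes the capacity bound, while $G^B$ retains the capacity bound but relaxes $\phi$-feasibility, the two relaxations impose incomparable constraint sets; I would therefore establish non-domination in each direction by a single explicit BRP instance. To obtain UBALB $>$ LB4 I would look for an instance in which, at some retrieval step, the capacity bound is the effective bottleneck: confining the blocks to height $h$ forces $G^B$ to place a higher-index block above a lower-index one, creating a blocking block that the infinite-height relaxation underlying $G^T$ avoids by building a taller column. To obtain LB4 $>$ UBALB I would instead make the prescribed placement order $\phi$ the bottleneck, choosing $B$ and $\phi$ so that $\phi$-feasibility compels $G^T$ to record a blocking block that $G^B$, not bound by $\phi$, can eliminate by reordering. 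In both cases I would keep the instance small enough that the two bounds can be evaluated in full and the strict inequality checked by inspection.

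The main obstacle I expect lies entirely in part (ii): the two instances must be designed so that the intended per-iteration discrepancy between $G^B$ and $G^T$ is not cancelled by the contributions of the other iterations of \eqref{eq:lb}, and $G^T$ must be evaluated correctly, which in principle requires the exponential procedure of \citet{TT2016}. Keeping the instances small, so that a single retrieval step carries the discrepancy while the remaining structure contributes equally to both bounds, is what will make these computations tractable and the two strict inequalities verifiable directly.
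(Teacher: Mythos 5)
Your approach coincides with the paper's. Part (i) is argued exactly as in the paper: $G^Z$ relaxes both the $\phi$-feasibility and the stack-capacity constraints while $G^B$ relaxes only the former, so $G^Z(\tilde M^{i},B^i,\phi^i)\le G^B(\tilde M^{i},B^i,\phi^i)$ pointwise and the domination follows by summing over \eqref{eq:lb}; this part of your proof is complete. For part (ii) you correctly identify the same strategy the paper uses --- two explicit BRP instances, one where the capacity bound is the binding constraint (forcing $G^T<G^B$ at some iteration, hence LB4 $<$ UBALB) and one where $\phi$-feasibility is binding (forcing $G^T>G^B$, hence LB4 $>$ UBALB), with the discrepancy confined to a single retrieval step so the remaining iterations contribute equally. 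However, you stop at describing the design principle without actually exhibiting the two instances, and those witnesses are the entire content of part (ii); the paper supplies them concretely (Figures \ref{fig:gb_gt} and \ref{fig:gt_gb}), verifying that $G^T(\tilde M^{1},B^1,\phi^1)\neq G^B(\tilde M^{1},B^1,\phi^1)$ while the two values agree for all $i\neq 1$. To close the argument you would need to write down two small yards with these properties and check them by inspection, as the paper does.
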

\begin{proof}

\noindent 
(i). Trivially, for any $i$, $G^Z(\tilde M^{i}, B^i, \phi^i)) \leq G^B(\tilde M^{i}, B^i, \phi^i))$.
In fact, the former relaxes both the $\phi$-feasibility and constraints on the height of the stacks, whereas the latter only relaxes the $\phi$-feasibility.\\
(ii). Figure \ref{fig:gb_gt} shows an instance of BRP where 
LB4 has a value that is strictly less than the one of the Unordered Blocks Assignment
Lower Bound,
while Figure \ref{fig:gt_gb} represents an instance where the opposite relation
holds.
The figures depict the initial yard of the corresponding BRP problem on the left side.
In both cases, it is easy to see that $G^T(\tilde M^{i}, B^i, \phi^i) = G^B(\tilde M^{i}, B^i, \phi^i)$ for any $i \neq 1$.
Instead, in the first case we have $G^T(\tilde M^{1}, B^1, \phi^1) < G^B(\tilde M^{1}, B^1, \phi^1)$, whereas in the second case $G^T(\tilde M^{1}, B^1, \phi^1) > G^B(\tilde M^{1}, B^1, \phi^1)$.
The corresponding computation is reported on the right side of the figures.
As a consequence, in the first example, the value of LB4 is strictly less
(strictly greater, in the second case) than the one of the Unordered Blocks Assignment
Lower Bound.
Notice that, in Figure \ref{fig:gb_gt}, the optimal configuration $\overline M_{T}^1$ for $G^T(\tilde M^{1}, B^1, \phi^1))$, exceeds the capacity of the first stack.
\end{proof}


\begin{figure}[htbp] 
  \begin{center}
\includegraphics[width=10cm,height=3cm]{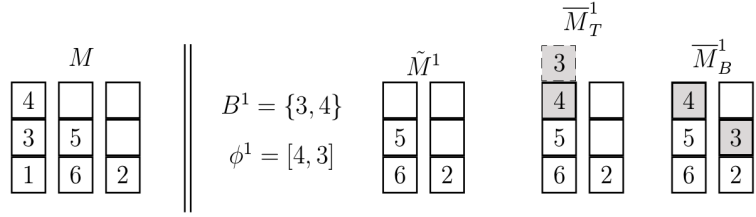}
  \end{center}
  \vspace{-.5cm}
  \caption{An example where the value of LB4 is strictly less than the
    one of the Unordered Blocks Assignment
Lower Bound.}
  \label{fig:gb_gt}
\end{figure}
        
\begin{figure}[htbp] 
  \begin{center}
\includegraphics[width=10cm,height=2.5cm]{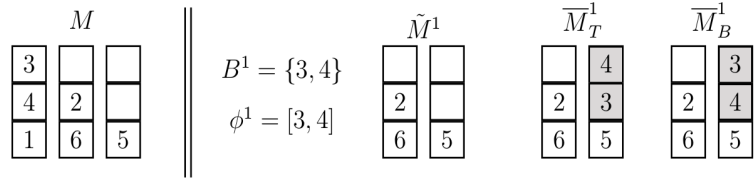}
  \end{center}
  \vspace{-.5cm}
  \caption{An example where the value of LB4 is strictly greater than the
    one of the Unordered Blocks Assignment
Lower Bound.}
  \label{fig:gt_gb}
\end{figure}


In the following we present an algorithm for solving $GMBIP^B(M, B, \phi)$.

Let $B = \{b_1,...,b_n\}$, $\delta_j$ be the number of available empty slots of stack
$j$ of $M$ and let $\sigma_j$ be the minimum index of a block located in $j$. 

\medskip
{\bf Algorithm 1}

\medskip
\begin{algorithmic}[1]
\State set G$^B$ = 0
\State order the blocks of $B$ such that $b_1 > b_2>...>b_n$
\State order the stacks so that $\sigma_1 > \sigma_2>...>\sigma_w$
\For{$i = 1, \dots, n$}
	\State let $R = \{j' \in \{1,...,w\}$ | $\sigma_{j'} > b_i$, $\delta_{j'} > 0\}$
    \If{ $R \neq \emptyset$}
    	\State let $j$ = argmax $\{\sigma_{j'} | j' \in R\}$.
    \Else	
	\State let $j$ = argmin $\{\sigma_{j'}\}$ for all $j'$ with $\delta_{j'} > 0$
	\State G$^B$ = G$^B$ + 1
    \EndIf
    \State locate $b_i$ in $j$
    \State $\delta_j = \delta_j-1$
\EndFor
\end{algorithmic}

GMBIP$^B$ can be seen as a minimum cost assignment problem, where each block
$b_i \in B$ has to be assigned to some stack $j$ with $\delta_j > 0$.
For each $i \in \{1,...,n\}$ and $j \in \{1,...,w\}$, assigning $i$ to $j$ has 
cost 1, if $b_i > \sigma_j$, and 0 otherwise.
Given $B$, $w$, $\delta$, and $\sigma$,  an instance of such assignment problem
is denoted by ASS$(B,w,\delta, \sigma)$.

\begin{theorem}
Algorithm 1 solves GMBIP$^B$ in $O(n \log(n) + w \log(w))$.
\end{theorem}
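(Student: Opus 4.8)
The plan is to prove two things: that Algorithm 1 returns the optimal value $G^B(M,B,\phi)$ of GMBIP$^B$ (correctness), and that it does so within the stated time bound. I would organize the argument around the equivalence with the assignment problem ASS$(B,w,\delta,\sigma)$ introduced just above the statement.

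First, I would make precise the reduction of GMBIP$^B$ to ASS. Since $\phi$-feasibility is relaxed, within any stack $j$ the blocks of $B$ assigned to it may be ordered freely; placing them in decreasing order (largest at the bottom of the added pile) ensures that none of the added blocks sits above a smaller added block. As every added block lies above the original contents of stack $j$, whose smallest index is $\sigma_j$, a block $b$ assigned to $j$ is forced to be blocking precisely when $b>\sigma_j$, and this cannot be avoided by any ordering. Hence the minimum number of blocking blocks contributed by stack $j$ equals $|\{b\in S_j : b>\sigma_j\}|$, which is exactly the cost charged in ASS. Summing over stacks gives $G^B(M,B,\phi)=\mathrm{OPT}(\mathrm{ASS}(B,w,\delta,\sigma))$, so it suffices to show that Algorithm 1 solves ASS.

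Second --- and this is the heart of the proof --- I would establish that the greedy rule of Algorithm 1 attains the minimum ASS cost. The key structural observation is that, once the stacks are sorted so that $\sigma_1>\cdots>\sigma_w$, the set of stacks on which a block $b$ can be placed at zero cost, namely $\{j:\sigma_j>b\}$, is a \emph{prefix} $\{1,\dots,k(b)\}$; moreover, processing the blocks in decreasing order $b_1>\cdots>b_n$ makes these prefixes nondecreasing. Minimizing the ASS cost is therefore equivalent to maximizing a $b$-matching in a convex (prefix) bipartite graph, and I would prove greedy optimality by an exchange argument: maintaining as invariant that some cost-minimizing assignment agrees with the greedy one on $b_1,\dots,b_{i-1}$, I show the invariant survives the placement of $b_i$. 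When Algorithm 1 places $b_i$ at zero cost it chooses the available compatible stack of largest $\sigma$ (smallest index); if an optimal assignment places $b_i$ differently, a swap with the block occupying that stack --- legal because every later block is smaller and hence compatible with at least the same prefix --- restores agreement without raising the cost. When Algorithm 1 pays for $b_i$, the whole qualifying prefix $\{1,\dots,k(b_i)\}$ is already saturated by the blocks $b_1,\dots,b_{i-1}$, so no assignment can place $b_i$ at zero cost either; and since an unhappy block may always be pushed to the largest free index without affecting its cost, the optimal assignment can be taken to agree there too. The delicate point, which I expect to be the main obstacle, is exactly this unhappy case: one must argue that front-filling the zero-cost placements while back-filling the paid ones never wastes prefix capacity, so that greedy saturation coincides with the genuine prefix (Hall-type) deficiency $\max_t\bigl(|\{b: k(b)\le t\}|-\sum_{j\le t}\delta_j\bigr)^{+}$ that lower-bounds the cost of every feasible assignment.

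Finally, for the complexity I would account for each phase separately. Sorting the blocks costs $O(n\log n)$ and sorting the stacks costs $O(w\log w)$. The main loop can then be run as a single sweep: a pointer delimiting the qualifying prefix $\{1,\dots,k(b_i)\}$ only advances as $b_i$ decreases, the pointer to the smallest free stack used by the zero-cost placements only advances, and the pointer to the largest free stack used by the paid placements only recedes; all three move monotonically, for a total of $O(n+w)$ work across the loop. Adding the phases yields $O(n\log n+w\log w)$, dominated by the two sorts, as claimed.
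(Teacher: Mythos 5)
Your proposal is correct and follows essentially the same route as the paper: reduce GMBIP$^B$ to the assignment problem ASS$(B,w,\delta,\sigma)$, prove greedy optimality by an exchange argument at the first index of disagreement with an optimal solution (with a secondary swap of a later, smaller block when the target stack would overflow), and charge the running time to the two sorts plus a linear sweep. You additionally spell out two points the paper leaves implicit --- why stacking the assigned blocks of $B$ in decreasing order makes the cost of a stack exactly $|\{b \in S_j : b > \sigma_j\}|$, and how the loop achieves amortized constant time per block via monotone pointers --- but these are elaborations, not a different argument.
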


\begin{proof}
  Given an input instance defined by a quadruple $(B,w,\delta, \sigma)$,
  we first prove that Algorithm 1 produces an optimal solution to the equivalent problem   ASS$(B,w,\delta, \sigma)$.
  In the following, let $\overline X \in \{0,1\}^{n \times w}$ and $X^* \in \{0,1\}^{n \times w}$
  denote the solution provided by Algorithm 1 and an optimal solution to
  ASS$(B,w,\delta, \sigma)$, respectively.
  According with this notation, $\overline X(i,j) = 1$ ($X^*(i,j) = 1$) indicates that
  block $b_i$ is located in stack $j$ ($i$ is assigned to $j$, respectively).
  Assume that ${\overline X}$ is not $X^*$ and let $i$ be smallest index such that ${\overline X}(i,j)$ and $X^*(i,j') = 1$ with $j \neq j'$. 
  We will show that we can produce another optimal solution where $b_i$ is 
  assigned to $j$. Indeed, let $X'$ be such that ${X'}(i,j) = 1$, 
  ${X}'(i,j') = 0$, and ${X}'(\bar i, \bar j) = X^*(\bar i, \bar j)$
  for all the other entries $(\bar i, \bar j) \notin \{(i,j),(i,j')\}$.
  Then two cases can occur. Case a): $X'$ is feasible. Case b): $X'$ is infeasible. 
  In the latter case, since $X^*$ is feasible, it follows that, in $X'$,   stack $j$ exceeds its capacity $\delta_j$. 
  Then observe that, since ${\overline X}(i,j) = 1$, the number of blocks in $b_1,...,b_{i-1}$ that are assigned to stack $j$ (both in ${\overline X}$ as well as in $X^*$) is strictly less that $\delta_j$. 
  Therefore, there exists $i' > i$ such that $X'(i',j) = 1$. 
  Hence,  we can construct a feasible matrix $\tilde X$ from $X'$, by setting 
  $\tilde X(i',j) = 0$ and $\tilde X(i',j') = 1$.
  Now, in order to prove that $X'$ (Case a)) or $\tilde X$ (Case b)) is still optimal, we distinguish two sub-cases.
  First consider $C(i,j) = 0$. Then, for Case a), it follows that $X'$ is
  optimal. For Case b), observe that, as $i' > i$, then $b_{i'} < b_{i}$ and,
  consequently, $C(i',j') \leq C(i,j')$. 
  Therefore, $C(i,j) + C(i',j') \leq C(i,j') + C(i',j)$ and $\tilde X$ is optimal.
  Now assume $C(i,j) = 1$. Observe that $C(i,j') = 1$, otherwise, for steps 6-7  of 
  Algorithm 1, $b_i$ would have been assigned to stack $j'$ instead of $j$.
  This proves that, in Case a), $X'$ is optimal. For Case b), notice that
  $\sigma_{j'} > \sigma_j$, otherwise, for steps 8-9 of Algorithm 1, again
  $b_i$ would have been assigned to stack $j'$. As a consequence, 
  $C(i',j') \leq C(i',j)$. Therefore, also in this case, 
  $C(i,j) + C(i',j') \leq C(i,j') + C(i',j)$ and $\tilde X$ is optimal.  

  Finally, we analyze the computational complexity of the algorithm. 
  Ordering the blocks of $B$ requires $O(n\log(n))$ steps, while the stacks can be ordered in $O(w\log(w))$. 
  Then, loop 4-12 is repeated $n$ times and, at each iteration, stack $j$ can be found 
  in constant time.
  Hence the overall complexity is $O(n \log(n) + w \log(w))$.
\end{proof}

\section{A beam search algorithm for the BRP: the general scheme}
\label{sec:beamserch}

Beam Search is a classic meta-heuristic based on an iterative enumeration of the feasible solutions that are represented by the leaves of an enumeration tree.
Such an enumeration is not complete because the number of descendants of each node is bounded by a parameter $\beta$.
The main ingredients of the approach are then: an algorithm to calculate an upper bound UB(Y) for the value of the best solution
that can be obtained by each node Y of the tree, and a value for the parameter $\beta$.
Therefore, for each node of the enumeration tree, the algorithm generates only the 
best (according to the value UB(Y)) $\beta$ descendants.

Here we present an improved beam search algorithm, called Bounded Beam Search, for the BRP. With respect to the standard 
beam search framework, we also make use of an algorithm to calculate a lower bound 
LB(Y) for the value of the best solution associated with each node Y. 
Such a value will be used both to select the best descendants to be generated and to further restrict the search space.
In the following, we will describe the Bounded Beam Search procedure in more details. 
Algorithm 2 reports the pseudocode of the algorithm.

\medskip
{\bf Algorithm 2}
\medskip
\begin{algorithmic}[1]
		\State $i = 0$
		\State $Y_0 =$ initial yard
		\State $CUB = UB(Y_0)$
		\State $S_0 = \{ Y_0 \}$
	\While{$S_i \neq \emptyset$}
		\For{$Y \in S_i$}
			\Procedure{Retrieval}{}
				\While{the next block to be retrieved $r$ is not blocked}
					\State remove $r$ from $Y$
				\EndWhile
			\EndProcedure
			\If {$Y \neq \emptyset$} 
				\State Let $u$ be the blocking block of $Y$ that has to be reshuffled as first	
				\Procedure{Level construction}{}
					\State $Q = \emptyset$
					\For{$s = 1, \dots, w$}
						\If {stack $s$ is not full} 
						\State Let $\overline Y$ be obtained from $Y$ by moving $u$ in stack $s$
						\State Compute $UB(\overline Y)$ and $LB(\overline Y)$
						\If {$UB(\overline Y) + i + 1 < CUB$}
						\State $CUB = UB(\overline Y) + i + 1$
						\EndIf
						\State Add $\overline Y$ to $Q$						
						\EndIf
					\EndFor
				\EndProcedure
				\Procedure{Best nodes selection}{}
					\State Order the nodes $Y$ of $Q$ according to the value $UB(Y)$ (use $LB(Y)$ to break ties)
					\State Let $S_{i+1}$ be the set of the best $\beta$ nodes of $Q$
					\State Remove from $S_{i+1}$ all nodes $Y$ with  $LB(Y) + i \geq CUB$
				\EndProcedure
			\EndIf	
		\EndFor
		\State $i = i + 1$
	\EndWhile
\end{algorithmic}

The algorithm proceeds by iteratively constructing the levels of the enumeration tree. Each level $S_i$
contains the nodes (yards) that can be obtained from the starting yard ($Y_0$) by performing
exactly $i$ reshuffle operations. The value of the current best solution is stored in the variable $CUB$.
The algorithm terminates when it constructs an empty level. 

At each generic iteration $i$, we consider all the nodes $Y \in S_i$ and proceed as follows:
we first remove from $Y$ all the blocks that can be retrieved without requiring a reshuffle 
operation (procedure RETRIEVAL).
Then, let $u$ be the incumbent block of $Y$ that has to be reshuffled, construct all
yards $\overline Y$, each associated with the movements of $u$ to feasible (not full) 
stacks of the yard, and compute $UB(\overline Y)$ and $LB(\overline Y)$.
Observe that $\overline Y$ has been obtained from the starting yard by performing $i + 1$ reshuffles.
Therefore, $UB(\overline Y) + i + 1$ represents an upper bound on the value of the optimal solution
of the problem. Then, in case, update the current $CUB$ value.  
All the nodes $\overline Y$ obtained in this way define the set $Q$ of candidate descendants.
Now, in order to select the nodes of $Q$ that will be used to construct the next level $S_{i+1}$, we
first order them according to the values of the upper bound $UB$ (in case of ties, we use $LB$ to define a priority).
Then, we select the first $\beta$ nodes according to this order.
Finally, we remove from this set the nodes $Y$ with $LB(Y) + i \geq CUB$, since they cannot generate a solution with a better value than $CUB$.

The performances of the Bounded Beam Search heuristic strongly depend on the choice of the algorithms used to calculate
$UB(Y)$ and $LB(Y)$, as well as the value of parameter $\beta$.
Indeed, an accurate procedure for $UB(Y)$ determines a more effective selection of the nodes
that are generated and, then, hopefully, a better final solution.
On the other hand, a good value of $LB(Y)$ helps to reduce the solution space and, therefore,
to decrease the total computing time. 
Clearly, it is usually the case that more accurate algorithms for the evaluation of $UB(Y)$ and
$LB(Y)$ require more computational time. 
Moreover, an higher value for $\beta$ enlarges the search space (and then the probability to find a good solution)
but also increases the total requested time. 
Then it is important to consider all such aspects to obtain good performances that balance quality and time. 
In Section \ref{helb}, we discuss how to tune these parameters in an efficient way. 

\begin{table}[htbp]
\begin{center}
\begin{tabular}{|c|c|c|c|c|c|}
\hline
Name &  Authors \\ \hline
{\sc group 1}  & \citet{WLT2009}\\ \hline 
{\sc group 2} & \citet{WT2010} \\ \hline 
{\sc group 3} & \citet{LL2010} \\ \hline
{\sc group 4} & \citet{CVS2011} \\ \hline
{\sc group 5} & \citet{UA2012} \\ \hline 
{\sc group 6} & \citet{ZQLZ2012} \\ \hline 
\end{tabular}
\end{center}
\caption{Test beds used in the literature of the BRP}
\label{inst_clas}
\end{table}

\section{Description of the Data Set}
\label{instances}

We considered several sets of instances presented in the literature, divided into
six groups that are resumed in Table~\ref{inst_clas}.
In the following, we first discuss datasets that are already available. Then we
introduce the new set of instances {\sc lbri}.
The set {\sc lbri} and the instances of {\sc group 1},..., {\sc group 6} are available at~\citet{B17}.

{\sc group} 1 is defined by the instances presented in \citet{WLT2009}, all with six
stacks of height between two and five. The storage densities of the stacks are
classified as light (L), medium (M) and heavy (H), according to the percentage
of operating capacity taken up. In particular, an instance with $w$ stacks of
height $h$ is light if the number $n$ of blocks is
$0.2 \times ((w-1) \times h + 1)$, is medium if
$n = 0.5 \times ((w-1) \times h + 1)$, and heavy if
$n = 0.8 \times ((w-1) \times h + 1)$.
In total, 12 combinations of stack sizes and storage densities are considered,
each composed by 50 instances. 

The instances of {\sc group} 2 (\citet{WT2010}) are grouped according to the number of
stacks $w \in [3,12]$ and height $h \in [3,12]$.
Therefore, we have 100 problem classes, each with 40 instances.
For each instance, the number of blocks $n = w \times h - (h - 1)$.

The instances of {\sc group} 3 (\citet{LL2010}) are defined on multiple bays yards.
Here, we consider only the subset of the 70 {\em Individual} ones,
i.e., those where the blocks have distinct priorities.
The instances can be {\em Random} (R), when the blocks are randomly located in the
available slots, or {\em Upside down} (U), when blocks with higher retrieval
priority are located in the lower slots of the stacks. All of them have
up to 10 bays, $w \in [3,16]$, $h \in \{6,8\}$, and $n \in [70,720]$.  
Here we allow reshuffles of blocks between different bays:
therefore, an instance with 10 bays and 16 stacks is equivalent to an instance
with 1 bay and 160 stacks.

{\sc group} 4 includes a set of 880 instances introduced in \citet{CVS2011}.
Such instances are randomly generated
with $w \in \{3,4,5,6,10,100\}$, $h \in \{5,6,...,12,102\}$ and
$n = w \times (h - 2)$. 

The 8000 instances of {\sc group} 5 (\citet{UA2012}) are generated with
$w \in [3,7]$, $h \in [4,7]$ and different densities of the blocks.
They can be {\em balanced} (b), if the stacks have the same number of
stored blocks, or  {\em unbalanced} (u). 

In {\sc group} 6, we consider the 12500 instances described in \citet{ZQLZ2012}.
Such instances are grouped into 125 classes of size 100,
according to the values of $w \in [6,10]$, $h \in [3,7]$,
and $n \in [(w - 1) \times h, w \times h - 1]$.

We now introduce a new set of instances, the Large Block Relocation Instances ({\sc lbri}),
that we generated in order to evaluate the performances of the Bounded Beam Search algorithm on a realistic-sized
test bed.
Indeed, except for the ones of {\sc group} 3 and {\sc group} 4,
the largest yards considered in the literature have a size of at most
12 stacks of height 12 and less than 140 blocks.
This is far away from a realistic scenario, where a yard is usually defined by
hundreds of stacks with thousands of blocks. 
The procedure implemented to generate the new instances follows the scheme
introduced in \citet{ZQLZ2012}. Here the authors show that it is always
possible to retrieve all the blocks of a yard with $w$ stacks of height $h$
and $n$ blocks if any block $i$ is located in a slot $t_i \leq h$
such that $h -t_i \le w \times h - n + (i - 1)$.
Then, for any triple $(w, h, n)$, with $n \leq w \times h$,
the instance generation procedure is the
following: we first randomly order the $n$ blocks and then, according to this
order, we iteratively assign each block to the first available slot of an
available randomly selected stack. If the yard generated this way does not fit
the feasibility condition defined above, then it is discarded and regenerated.
The set {\sc lbri} consists of 8400 instances, 100 for each triple $w \in \{50,100,500,1000\}$, $h \in \{4,7,10\}$
and $n \in [w \times (h - 1), w \times h -1]$.

\section{Computing the Unordered Blocks Assignment
Lower Bound} \label{sec:reslb}

Here we test the proposed Unordered Blocks Assignment
Lower Bound and compare the results with the ones obtained by the other lower bounds in the literature, namely LB1, LB3, LB4.
Such comparison is reported in Table \ref{lbn}.
Each row of the table represents a dataset.
Columns LB1, LB3, LB4, and UBALB report average lower bounds (column {\em Value}) and computing times (column {\em Time}) on the instances of the considered
dataset.
The symbol $\ast$ is used to indicated that some instances of the corresponding
group could not be solved within the time limit of one hour by the considered algorithm.

The computational results emphasize the theoretical dominance rules
among LB1, LB3, LB4, and the Unordered Blocks Assignment Lower Bound that
we mentioned in Section~\ref{sec:newlowerbound}.
They also show
that, in practice, LB4 defines better bounds than UBALB.
However, the computational times required
to calculate $G^T(M, B, \phi)$ are, in particular on the large instances, much higher than
the ones needed for solving GMBIP$^B$. Indeed, recall that the algorithm
proposed by \citet{TT2016} is exponential in the number of reshuffled
blocks, while Algorithm 1 runs in polynomial time. Therefore, even if LB4 produces better bounds,
it can hardly be used for tackling real size BRP instances.
On the other hand, the Unordered Blocks Assignment
Lower Bound seems to present a good compromise between
quality of the solutions and computational times, as it always outperforms
the values provided by LB1 and LB3 in, essentially, the same computing time.

\begin{table}[]
\centering
{\fontsize{6}{11}\selectfont
\begin{tabular}{l|l|l|l|r|cc|cc|cc|cc|}
\cline{6-13}
\multicolumn{1}{l}{}                            & \multicolumn{1}{l}{}              & \multicolumn{1}{l}{}                  & \multicolumn{1}{l}{}            &       & \multicolumn{2}{c|}{LB1}                               & \multicolumn{2}{c|}{LB3}                               & \multicolumn{2}{c|}{LB4}                               & \multicolumn{2}{c|}{UBALB}                               \\ \hline
\multicolumn{1}{|c|}{{\sc set}}                     & \multicolumn{1}{c|}{$n$}            & \multicolumn{1}{c|}{$w$}                & \multicolumn{1}{c|}{$h$}          & \multicolumn{1}{c|}{{\scriptsize \#}{\sc i}}       & \multicolumn{1}{l|}{Value} & \multicolumn{1}{l|}{Time} & \multicolumn{1}{l|}{Value} & \multicolumn{1}{l|}{Time} & \multicolumn{1}{l|}{Value} & \multicolumn{1}{l|}{Time} & \multicolumn{1}{l|}{Value} & \multicolumn{1}{l|}{Time} \\ \hline
\multicolumn{1}{|c|}{\multirow{1}{*}{{\sc group 1}}}       & $[3,21]$   & 6              & $[2,5]$  & 600   & 2.93                       & 0.00                         & 3.16                       & 0.00                        & 3.20                        & 0.00                        & 3.16                       & 0.00                        \\ \hline
\multicolumn{1}{|c|}{\multirow{1}{*}{{\sc group 2}}}        & $[7,133]$    & $[3,12]$      & $[3,12]$ & 4000  & 32.71                      & 0.00                        & 39.01                      & 0.00                        & 41.13                      & 0.00                        & 39.11                      & 0.00                        \\ \hline
\multicolumn{1}{|c|}{\multirow{1}{*}{{\sc group 3}}}          & $[70,720]$     & $[16,160]$         & $[6,8]$  & 14    & 213.11                      & 0.00                        & 215.29                     & 0.00                        & 215.61                      & 0.00                        & 215.33                      & 0.00                        \\ \hline
\multicolumn{1}{|c|}{\multirow{2}{*}{{\sc group 4}}}   & $[9,100]$    & $[3,10]$      & $[5,12]$ & 840   & 18.37                      & 0.00                        & 22.04                      & 0.00                        & 22.87                      & 0.00                        & 22.05                      & 0.00                        \\ \cdashline{2-13}
\multicolumn{1}{|c|}{}                          & $10000$          & $100$ & $102$        & 40    & 9485.92                    & 0.00                     & 11177.08                   & 0.00                     & $\ast$                          & $\ast$                         & 11177.52                   & 0.00                     \\ \hline
\multicolumn{1}{|c|}{\multirow{1}{*}{{\sc group 5}}} & $[3,7]$    & $[6,36]$       & $[4,7]$  & 8000  & 8.00                          & 0.00                        & 9.53                       & 0.00                        & 9.81                       & 0.00                        & 9.54                       & 0.00                        \\ \hline
\multicolumn{1}{|c|}{\multirow{1}{*}{{\sc group 6}}}      & $[15,69]$       & $[6,10]$   & $[3,7]$ & 12500 & 21.46                      & 0.00                        & 25.87                      & 0.00                        & 26.59                      & 0.00                        & 25.93                      & 0.00                         \\ \hline
\multicolumn{1}{|c|}{\multirow{1}{*}{{\sc lbri}}}      & $[50,1000]$       & $[4,10]$   & $[199,9999]$ & 8400 & 2140.27                      & 0.00                        & 2156.50                      & 0.00                        & 2160.93                      & 0.03                        & 2156.65                      & 0.00                         \\ \hline
\end{tabular}}
\vspace{.3cm}
\caption{Comparative analysis on different lower bounds obtained on the seven datasets described in Section \ref{instances}}
\label{lbn}
\end{table}

\section{Implementation of the Bounded Beam Search algorithm}
\label{helb}

In this section, we give details on the implementation of the Bounded Beam Search algorithm.
In particular, as a consequence of the computational results presented in Section \ref{sec:reslb}, we use the Unordered Blocks Assignment
Lower Bound as lower bound. 
The choice of $\beta$ and the selection of the algorithms used to compute the value $UB$ are
addressed in Sections \ref{beta} and \ref{ub} respectively.
For all the tests and the final computational experience,
we use a machine with a CPU Intel Core i7-3632QM at 2.2GHz and with 1 GB RAM under a Linux
operating system. We set one second as time limit for the solution of each instance.

\subsection{The parameter $\beta$}
\label{beta}

As discussed in Section \ref{sec:beamserch}, choosing the right value for $\beta$ is crucial for the performances
of the Bounded Beam Search algorithm, as it controls the size of the search space. Indeed,
a larger $\beta$ implies a larger
number of descendants of each node of the search tree and, therefore, of solutions explored.
On the other hand, this also corresponds to larger computing times.
Notice that, when $\beta = 1$, one obtains a greedy algorithm,
while an exact approach corresponds to an unrestricted $\beta$.

We test values $\{50,$ $100,$ $200,$ $300,$ $400,$ $500,$ $600,$ $700,$ $800\}$ for $\beta$.
We use ChainF to compute $UB$, as it seems the best compromise between computational time and quality, according to the results in the literature (\citet{JV2014}).
The computational results are shown in Table \ref{bsbeta}.
We grouped the instances in the literature into classes according to the size (number of blocks $n$).
Each column of the table represents a given class and each row is associated with a value of $\beta$.
Then, in each entry $(\bar \beta, \bar r)$ of the table,
we report the number of times the algorithm with $\beta = \bar \beta$ obtains the best results on the instances
with $n \in \bar r$.  

For each column, the best value is in bold. As one could expect, as soon as the instance size increases,
the best performances are obtained for smaller values of $\beta$.
Indeed, without a time limit, the best performance would be obtained for an unrestricted $\beta$
(exact approach).
Whereas, if a time limit is given, things are different.
In fact, if we generate a large number of yards (large values of $\beta$), we increase the choice of finding good solutions.
However, we may not be able, due to the time limit, to process the generated yards and, hence, to find those solutions.
On the other hand, if we generate a small number of yards (small values of $\beta$), we increase the choice of being able to process all of them within the time limit.
At the same time, we may miss (not generate) yards corresponding to good solutions.
Therefore, a good value of $\beta$ for an algorithm with a time limit, is a compromise between the number of yards generated and the time we have to process them. 

Therefore, we implement the Bounded Beam Search procedure by setting $\beta$
according to the size of the input instance.
In particular: $\beta = 800$, if $n \in [0, 40)$; $\beta = 500$, if $n \in [40, 60)$; $\beta = 300$,
    if $n \in [60, 80)$; $\beta = 200$, if $n \in [80, 100)$; $\beta = 100$, if $n \in [100, 120)$; $\beta = 50$,
          if $n \ge 120$.

\begin{table}
\centering
{\fontsize{7}{10}\selectfont
\setlength\tabcolsep{2.5pt}
\begin{tabular}{cc|c|c|c|c|c|c|c|c|}
\cline{2-10}
\multicolumn{1}{c|}{} & \multicolumn{1}{c|}{$n < 40$} & \multicolumn{1}{c|}{$[40,60)$} & \multicolumn{1}{c|}{$[60,80)$} & \multicolumn{1}{c|}{$[80,100)$} & \multicolumn{1}{c|}{$[100,120)$} & \multicolumn{1}{c|}{$[120,140)$} & \multicolumn{1}{c|}{$[140,200)$} & \multicolumn{1}{c|}{$[200,400)$}  & \multicolumn{1}{c|}{$[400,500)$}   \\ \hline
\multicolumn{1}{|c|}{50} & 17263 & 5169 & 1057 & 156 & 139 & \textbf{81} & 397 & \textbf{1043} & \textbf{919} \\ \hline 
\multicolumn{1}{|c|}{100} & 17361 & 5439 & 1241 & 220 & \textbf{195} & 63 & \textbf{398} & 614 & 238 \\ \hline 
\multicolumn{1}{|c|}{200} & 17423 & 5655 & 1383 & \textbf{268} & 98 & 6 & \textbf{398} & 475 & 117 \\ \hline 
\multicolumn{1}{|c|}{300} & 17448 & 5779 & \textbf{1489} & 192 & 38 & 5 & 390 & 429 & 89 \\ \hline 
\multicolumn{1}{|c|}{400} & 17459 & 5851 & 1460 & 124 & 5 & 0 & 388 & 398 & 77 \\ \hline 
\multicolumn{1}{|c|}{500} & 17473 & \textbf{5888} & 1350 & 71 & 2 & 1 & 384 & 382 & 72 \\ \hline 
\multicolumn{1}{|c|}{600} & 17486 & 5881 & 1238 & 55 & 0 & 0 & 382 & 376 & 66 \\ \hline 
\multicolumn{1}{|c|}{700} & \textbf{17496} & 5817 & 1110 & 32 & 0 & 0 & 380 & 371 & 63 \\ \hline 
\multicolumn{1}{|c|}{800} & 17495 & 5737 & 1001 & 22 & 0 & 0 & 377 & 367 & 60 \\ \hline 
\\ 
\cline{2-10}
\multicolumn{1}{c|}{} & \multicolumn{1}{c|}{$[500,600)$} & \multicolumn{1}{c|}{$[600,700)$} & \multicolumn{1}{c|}{$[700,1000)$}  & \multicolumn{1}{c|}{$[1000,2000)$} & \multicolumn{1}{c|}{$[2000,4000)$} & \multicolumn{1}{c|}{$[4000,5000)$} & \multicolumn{1}{c|}{$[5000,7000)$} & \multicolumn{1}{c|}{$[7000,10000)$} & \multicolumn{1}{c|}{$n \geq 10000$}  \\ \hline
\multicolumn{1}{|c|}{50} & \textbf{14} & \textbf{665} & \textbf{930} & \textbf{399} & \textbf{1100} & \textbf{1000} & \textbf{700} & \textbf{1000} & \textbf{40} \\ \hline 
\multicolumn{1}{|c|}{100} & \textbf{14} & 295 & 524 & 394 & 1095 & 986 & 698 & 975 & 39 \\ \hline 
\multicolumn{1}{|c|}{200} & \textbf{14} & 199 & 396 & 394 & 1082 & 968 & 684 & 935 & 38 \\ \hline 
\multicolumn{1}{|c|}{300} & \textbf{14} & 171 & 371 & 393 & 1082 & 952 & 678 & 896 & 34 \\ \hline 
\multicolumn{1}{|c|}{400} & \textbf{14} & 158 & 359 & 393 & 1076 & 928 & 672 & 862 & 30 \\ \hline 
\multicolumn{1}{|c|}{500} & \textbf{14} & 155 & 358 & 393 & 1066 & 908 & 658 & 828 & 27 \\ \hline 
\multicolumn{1}{|c|}{600} & \textbf{14} & 152 & 354 & 392 & 1057 & 898 & 648 & 776 & 25 \\ \hline 
\multicolumn{1}{|c|}{700} & \textbf{14} & 152 & 352 & 392 & 1053 & 881 & 628 & 736 & 20 \\ \hline 
\multicolumn{1}{|c|}{800} & \textbf{14} & 152 & 350 & 392 & 1043 & 856 & 613 & 690 & 17 \\ \hline 
\end{tabular}
}
\caption{Comparative analysis for different values of $\beta$}
\label{bsbeta}
\end{table}

\subsection{The algorithm for $UB$}
\label{ub}

The performances of a beam search procedure heavily depend on the algorithm used to compute an upper bound
$UB$ for each node of the search tree, in order to select its most promising descendants.
To choose the one that fits better in the Bounded Beam Search heuristic, we test algorithms Expected Reshuffle Index, Difference1, Probe Restricted 4, Chain, Group Assignment Heuristic, ChainF, the heuristic procedure proposed in \citet{CSV2012},
that are the best fast heuristics in the literature.
In the experiments, we set $\beta$ as described in Section \ref{beta}.

Table \ref{resultsheuristic} reports the computational results obtained.
Each row corresponds to one of the $UB$ algorithm taken under consideration (ERI denotes the Expected Reshuffle Index heuristic, PR4 the Probe Restricted 4, GAH the Group Assignment Heuristic, Min-Max the heuristic procedure proposed in \citet{CSV2012}), while, as in Table \ref{bsbeta}, each column represents a class of instances, defined according to the number of blocks $n$.

For each group of instances and $UB$ procedure, we report the number of times the corresponding
algorithm obtains the best performances.
According to the results in the table, we implement the Bounded Beam Search
procedure with ChainF when $n < 1000$,
with Difference1 for $n \in [1000,10000)$ and with GAH when $n \ge 10000$.

\begin{table}
\centering
{\fontsize{7}{10}\selectfont
\setlength\tabcolsep{2.5pt}
\begin{tabular}{cc|c|c|c|c|c|c|c|c|}
\cline{2-10}
\multicolumn{1}{c|}{} & \multicolumn{1}{c|}{$n<40$} & \multicolumn{1}{c|}{$[40,60)$} & \multicolumn{1}{c|}{$[60,80)$} & \multicolumn{1}{c|}{$[80,100)$} & \multicolumn{1}{c|}{$[100,120)$} & \multicolumn{1}{c|}{$[120,140)$} & \multicolumn{1}{c|}{$[140,200)$} & \multicolumn{1}{c|}{$[200,400)$}  & \multicolumn{1}{c|}{$[400,500)$}   \\ \hline
\multicolumn{1}{|c|}{Min-Max} & 17420 & 5456 & 1074 & 81 & 42 & 25 & 393 & 436 & 14 \\ \hline 
\multicolumn{1}{|c|}{ERI} & 16939 & 3458 & 192 & 5 & 0 & 6 & 279 & 202 & 6 \\ \hline 
\multicolumn{1}{|c|}{Difference1} & 17402 & 5303 & 988 & 73 & 20 & 11 & 393 & 575 & 51 \\ \hline 
\multicolumn{1}{|c|}{PR4} & 17429 & 5488 & 1092 & 107 & 46 & 20 & 396 & 561 & 88 \\ \hline 
\multicolumn{1}{|c|}{Chain} & 17452 & 5616 & 1176 & 127 & 54 & 24 & 392 & 447 & 23 \\ \hline 
\multicolumn{1}{|c|}{GAH} & \textbf{17455} & 5662 & 1238 & 148 & \textbf{113} & \textbf{54} & 392 & 451 & 155 \\ \hline 
\multicolumn{1}{|c|}{ChainF} & 17451 & \textbf{5681} & \textbf{1275} & \textbf{156} & 97 & 45 & \textbf{399} & \textbf{682} & \textbf{766} \\ \hline 
\\ 
\cline{2-10}
\multicolumn{1}{c|}{} & \multicolumn{1}{c|}{$[500,600)$} & \multicolumn{1}{c|}{$[600,700)$} & \multicolumn{1}{c|}{$[700,1000)$}  & \multicolumn{1}{c|}{$[1000,2000)$} & \multicolumn{1}{c|}{$[2000,4000)$} & \multicolumn{1}{c|}{$[4000,5000)$} & \multicolumn{1}{c|}{$[5000,7000)$} & \multicolumn{1}{c|}{$[7000,10000)$} & \multicolumn{1}{c|}{$n  \geq 10000$}  \\ \hline
\multicolumn{1}{|c|}{Min-Max} & 10 & 10 & 0 & 253 & 246 & 0 & 0 & 0 & 0 \\ \hline 
\multicolumn{1}{|c|}{ERI} & 8 & 9 & 0 & 161 & 148 & 0 & 0 & 0 & 0 \\ \hline 
\multicolumn{1}{|c|}{Difference1} & \textbf{13} & 282 & 105 & \textbf{298} & \textbf{935} & \textbf{958} & \textbf{689} & \textbf{1000} & 0 \\ \hline 
\multicolumn{1}{|c|}{PR4} & 11 & 46 & 10 & 275 & 280 & 0 & 0 & 0 & 0 \\ \hline 
\multicolumn{1}{|c|}{Chain} & 10 & 16 & 0 & 254 & 245 & 0 & 0 & 0 & 0 \\ \hline 
\multicolumn{1}{|c|}{GAH} & 11 & 42 & 64 & 225 & 223 & 0 & 0 & 0 & \textbf{40} \\ \hline 
\multicolumn{1}{|c|}{ChainF} & 12 & \textbf{399} & \textbf{838} & 294 & 380 & 42 & 16 & 0 & 0 \\ \hline 
\end{tabular}
}
\caption{Comparative analysis for different $UB$ algorithms}
\label{resultsheuristic}
\end{table}

\section{Computational results}
\label{sec:results}

In this Section we compare the performances of the Bounded Beam Search
algorithm with the ones of the other approaches in the
literature on the instances of {\sc group 1},$\dots$, {\sc group 6} and {\sc lbri}.
The algorithms we consider are taken from the ones introduced in Section \ref{intro}.
We reimplemented the procedures Difference1, Chain, ChainF, Expected Reshuffle Index, Probe Restricted 4, the heuristic proposed in \citet{CSV2012}, Group Assignment Heuristic.
The computational results for the other algorithms are taken from the literature: see \citet{JV2014},
\citet{WT2010}, \citet{ZQLZ2012}, \citet{CVS2011}, \citet{CSV2009}, \citet{LL2010}, \citet{KA2016b}, \citet{WLT2009}, and
\citet{UA2012}.
Moreover, we also consider the exact procedure of \citet{TT2016} (whose code was provided us by the authors)
setting a time limit of one second. We denote by BBT this heuristic variant of the method. Moreover,  Bounded Beam Search procedure will be addressed as BBS. 
A time limit of one second is also set for the Bounded Beam Search algorithm
as well as for the other procedures we reimplemented.
However, the results taken from the literature may not respect this limit and BBT may slightly exceed it.

For each set of instances, we report a table with detailed results.
In the tables, $w$ indicates the number of stacks, $h$ the number of available slots for each stack, $n$ the number
of blocks, and {\scriptsize \#}{\sc i} is the number of instances of the group.
Resh and Time denote the average number of reshuffles and the average computing time, respectively.
Resh is in bold for the (possibly not unique) best approach.
In order to evaluate the quality of the solutions provided by the heuristic algorithms, we also present, in column Opt,
the average number of reshuffles obtained by the exact procedure of \citet{TT2016}
within a time limit of 1800 seconds. Symbol ``$\ast$'' in the columns indicates that the corresponding procedure runs out
of time on at least one instance of the group.
Symbol``$\blacktriangle$'' denotes that Resh = Opt.
For the instances of {\sc group 1},$\dots$, {\sc group 6}, the tables report the solutions obtained by BBS, BBT
and, in column BL, the best results provided by the algorithms that we reimplemented (Difference1, Chain, ChainF, Expected Reshuffle Index, Probe Restricted 4, the heuristic proposed in \citet{CSV2012}, Group Assignment Heuristic).
Moreover, each table may include additional columns corresponding to algorithms we did not re-implement.
For each algorithm we also report the machine used for the experiments.
Observe that the machine we use never outperforms the ones used for the algorithms in the additional columns.
For every table we also present a figure that summarizes the detailed results.
Such figures report a column for each algorithm considered in the corresponding table.
The height of each column measures the number of times ($\# wins$) the associated algorithm is the best approach.

In Table \ref{tab:g1} and Figure \ref{fig:g1},
we show the computational results obtained on the instances of {\sc group 1}.
Column MRIP$_k$ reports the results output by the Minimum Reshuffle Integer Program heuristic with $k = 6$ which is the best heuristic algorithm presented in \citet{WLT2009}.
Since the instances are quite small, all the approaches are rather effective and
both BBT and the Bounded Beam Search algorithm always provide the optimal
solutions. 

\begin{table}
\begin{center}
{\scriptsize
\begin{tabular}{cccc|cc|cc|cc|cc|c}
\cline{5-12}
 & & & \multicolumn{1}{c|}{} & \multicolumn{2}{c|}{MRIP$_k$$^a$}  & \multicolumn{2}{c|}{BL$^b$} & \multicolumn{2}{c|}{BBT$^b$}  & \multicolumn{2}{c|}{BBS$^b$} & \multicolumn{1}{c}{} \\ \hline
\multicolumn{1}{|c|}{$w$} & \multicolumn{1}{c|}{$h$}  & \multicolumn{1}{c|}{$n$} & \multicolumn{1}{c|}{{\scriptsize \#}{\sc i}}  & \multicolumn{1}{c|}{Resh} & \multicolumn{1}{c|}{Time} & \multicolumn{1}{c|}{Resh} & \multicolumn{1}{c|}{Time} & \multicolumn{1}{c|}{Resh} & \multicolumn{1}{c|}{Time} & \multicolumn{1}{c|}{Resh} & \multicolumn{1}{c|}{Time} & \multicolumn{1}{c|}{Opt}   \\ \hline
\multicolumn{1}{|c|}{6} & \multicolumn{1}{c|}{2} & \multicolumn{1}{c|}{9} & \multicolumn{1}{c|}{50} & \textbf{1.70}$^\blacktriangle$ & 0.181  & \textbf{1.70}$^\blacktriangle$ & 0.000 & \textbf{1.70}$^\blacktriangle$ & 0.002 & \textbf{1.70}$^\blacktriangle$ & 0.002 & \multicolumn{1}{c|}{1.70} \\ \cline{1-4} 
\multicolumn{1}{|c|}{6} & \multicolumn{1}{c|}{3} & \multicolumn{1}{c|}{13} & \multicolumn{1}{c|}{50} & \textbf{4.58}$^\blacktriangle$ & 0.877  & \textbf{4.58}$^\blacktriangle$ & 0.000 & \textbf{4.58}$^\blacktriangle$ & 0.002 & \textbf{4.58}$^\blacktriangle$ & 0.002 & \multicolumn{1}{c|}{4.58} \\ \cline{1-4} 
\multicolumn{1}{|c|}{6} & \multicolumn{1}{c|}{4} & \multicolumn{1}{c|}{17} & \multicolumn{1}{c|}{50} & \textbf{7.56}$^\blacktriangle$ & 6.437  & 7.60 & 0.000 & \textbf{7.56}$^\blacktriangle$ & 0.002 & \textbf{7.56}$^\blacktriangle$ & 0.003 & \multicolumn{1}{c|}{7.56} \\ \cline{1-4} 
\multicolumn{1}{|c|}{6} & \multicolumn{1}{c|}{5} & \multicolumn{1}{c|}{21} & \multicolumn{1}{c|}{50} & 11.80 & 66.809  & 11.84 & 0.000 & \textbf{11.68}$^\blacktriangle$ & 0.002 & \textbf{11.68}$^\blacktriangle$ & 0.004 & \multicolumn{1}{c|}{11.68} \\ \hline 
\end{tabular}
}
\end{center}
{\scriptsize $^a$Intel dual core Xeon 3GHz and 4 GB RAM}\\
{\scriptsize $^b$Intel Core i7-3632QM 2.2GHz and 1 GB RAM}
\caption{Computational results on the set {\sc group 1}}
\label{tab:g1}
\end{table}

\begin{figure}[htbp] 
  \begin{center}

\includegraphics[width=6cm,height=5cm]{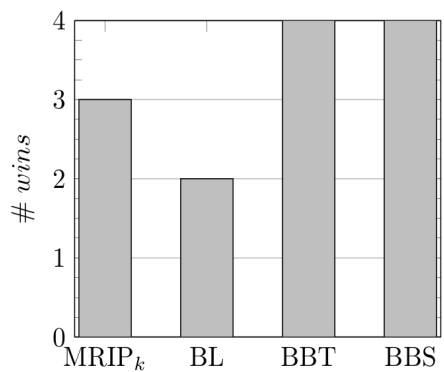}

    \end{center}
\caption{Aggregated results on the set {\sc group 1}}
\label{fig:g1}
\end{figure}

Table \ref{tab:g2a}, Table \ref{tab:g2b} and Figure \ref{fig:g2} report the computational results on the
instances of {\sc group 2}.
Here we did not consider the results in \citet{WT2010} since, as confirmed us by the authors, they contain some errors.
With only two exceptions, the Bounded Beam Search heuristic is always the best approach and, especially for the larger instances, outperforms all
the other algorithms. Moreover, in 41 out of 100 subsets of 40 instances, the solutions provided by the Bounded Beam Search algorithm are certified to be optimal
for any instance.  

\begin{table}
\begin{center}
{\scriptsize
\begin{tabular}{cccc|cc|cc|cc|c}
\cline{5-10}
 & & & \multicolumn{1}{c|}{} & \multicolumn{2}{c|}{BL$^a$} & \multicolumn{2}{c|}{BBT$^a$} & \multicolumn{2}{c|}{BBS$^a$} & \multicolumn{1}{c}{} \\ \hline
\multicolumn{1}{|c|}{$w$} & \multicolumn{1}{c|}{$h$} & \multicolumn{1}{c|}{$n$} & \multicolumn{1}{c|}{{\scriptsize \#}{\sc i}}  & \multicolumn{1}{c|}{Resh} & \multicolumn{1}{c|}{Time}  & \multicolumn{1}{c|}{Resh} & \multicolumn{1}{c|}{Time} & \multicolumn{1}{c|}{Resh} & \multicolumn{1}{c|}{Time} & \multicolumn{1}{c|}{Opt}   \\ \hline
\multicolumn{1}{|c|}{3} & \multicolumn{1}{c|}{3} & \multicolumn{1}{c|}{7} & \multicolumn{1}{c|}{40} & \textbf{3.30}$^\blacktriangle$ & 0.000 & \textbf{3.30}$^\blacktriangle$ & 0.002 & \textbf{3.30}$^\blacktriangle$ & 0.002 & \multicolumn{1}{c|}{3.30} \\ \cline{1-4} 
\multicolumn{1}{|c|}{3} & \multicolumn{1}{c|}{4} & \multicolumn{1}{c|}{9} & \multicolumn{1}{c|}{40} & 5.70 & 0.000 & \textbf{5.68}$^\blacktriangle$ & 0.002 & \textbf{5.68}$^\blacktriangle$ & 0.002 & \multicolumn{1}{c|}{5.68} \\ \cline{1-4} 
\multicolumn{1}{|c|}{3} & \multicolumn{1}{c|}{5} & \multicolumn{1}{c|}{11} & \multicolumn{1}{c|}{40} & 8.50 & 0.000 & \textbf{8.40}$^\blacktriangle$ & 0.002 & \textbf{8.40}$^\blacktriangle$ & 0.002 & \multicolumn{1}{c|}{8.40} \\ \cline{1-4} 
\multicolumn{1}{|c|}{3} & \multicolumn{1}{c|}{6} & \multicolumn{1}{c|}{13} & \multicolumn{1}{c|}{40}  & 12.15 & 0.000 & \textbf{11.50}$^\blacktriangle$ & 0.002 & \textbf{11.50}$^\blacktriangle$ & 0.002 & \multicolumn{1}{c|}{11.50} \\ \cline{1-4} 
\multicolumn{1}{|c|}{3} & \multicolumn{1}{c|}{7} & \multicolumn{1}{c|}{15} & \multicolumn{1}{c|}{40} & 15.55 & 0.000 & \textbf{15.03}$^\blacktriangle$ & 0.002 & \textbf{15.03}$^\blacktriangle$ & 0.003 & \multicolumn{1}{c|}{15.03} \\ \cline{1-4} 
\multicolumn{1}{|c|}{3} & \multicolumn{1}{c|}{8} & \multicolumn{1}{c|}{17} & \multicolumn{1}{c|}{40} & 19.83 & 0.000 & \textbf{18.63}$^\blacktriangle$ & 0.003 & \textbf{18.63}$^\blacktriangle$ & 0.004 & \multicolumn{1}{c|}{18.63} \\ \cline{1-4} 
\multicolumn{1}{|c|}{3} & \multicolumn{1}{c|}{9} & \multicolumn{1}{c|}{19} & \multicolumn{1}{c|}{40} & 25.70 & 0.000 & \textbf{23.98}$^\blacktriangle$ & 0.003 & \textbf{23.98}$^\blacktriangle$ & 0.013 & \multicolumn{1}{c|}{23.98} \\ \cline{1-4} 
\multicolumn{1}{|c|}{3} & \multicolumn{1}{c|}{10} & \multicolumn{1}{c|}{21} & \multicolumn{1}{c|}{40} & 31.35 & 0.000 & \textbf{28.20}$^\blacktriangle$ & 0.006 & \textbf{28.20}$^\blacktriangle$ & 0.042 & \multicolumn{1}{c|}{28.20} \\ \cline{1-4} 
\multicolumn{1}{|c|}{3} & \multicolumn{1}{c|}{11} & \multicolumn{1}{c|}{23} & \multicolumn{1}{c|}{40} & 37.93 & 0.000 & \textbf{32.70}$^\blacktriangle$ & 0.011 & 32.75 & 0.070 & \multicolumn{1}{c|}{32.70} \\ \cline{1-4} 
\multicolumn{1}{|c|}{3} & \multicolumn{1}{c|}{12} & \multicolumn{1}{c|}{25} & \multicolumn{1}{c|}{40} & 42.38 & 0.000 & \textbf{36.43}$^\blacktriangle$ & 0.035 & \textbf{36.43}$^\blacktriangle$ & 0.090 & \multicolumn{1}{c|}{36.43} \\ \cline{1-4} 
\multicolumn{1}{|c|}{4} & \multicolumn{1}{c|}{3} & \multicolumn{1}{c|}{10} & \multicolumn{1}{c|}{40} & \textbf{4.85}$^\blacktriangle$ & 0.000 & \textbf{4.85}$^\blacktriangle$ & 0.002 & \textbf{4.85}$^\blacktriangle$ & 0.002 & \multicolumn{1}{c|}{4.85} \\ \cline{1-4} 
\multicolumn{1}{|c|}{4} & \multicolumn{1}{c|}{4} & \multicolumn{1}{c|}{13} & \multicolumn{1}{c|}{40} & 8.53 & 0.000 & \textbf{8.43}$^\blacktriangle$ & 0.002 & \textbf{8.43}$^\blacktriangle$ & 0.002 & \multicolumn{1}{c|}{8.43} \\ \cline{1-4} 
\multicolumn{1}{|c|}{4} & \multicolumn{1}{c|}{5} & \multicolumn{1}{c|}{16} & \multicolumn{1}{c|}{40} & 12.55 & 0.000 & \textbf{12.25}$^\blacktriangle$ & 0.002 & \textbf{12.25}$^\blacktriangle$ & 0.003 & \multicolumn{1}{c|}{12.25} \\ \cline{1-4} 
\multicolumn{1}{|c|}{4} & \multicolumn{1}{c|}{6} & \multicolumn{1}{c|}{19} & \multicolumn{1}{c|}{40} & 16.35 & 0.000 & \textbf{15.63}$^\blacktriangle$ & 0.002 & \textbf{15.63}$^\blacktriangle$ & 0.004 & \multicolumn{1}{c|}{15.63} \\ \cline{1-4} 
\multicolumn{1}{|c|}{4} & \multicolumn{1}{c|}{7} & \multicolumn{1}{c|}{22} & \multicolumn{1}{c|}{40} & 24.23 & 0.000 & \textbf{22.60}$^\blacktriangle$ & 0.003 & \textbf{22.60}$^\blacktriangle$ & 0.023 & \multicolumn{1}{c|}{22.60} \\ \cline{1-4} 
\multicolumn{1}{|c|}{4} & \multicolumn{1}{c|}{8} & \multicolumn{1}{c|}{25} & \multicolumn{1}{c|}{40} & 30.65 & 0.000 & \textbf{27.70}$^\blacktriangle$ & 0.006 & \textbf{27.70}$^\blacktriangle$ & 0.053 & \multicolumn{1}{c|}{27.70} \\ \cline{1-4} 
\multicolumn{1}{|c|}{4} & \multicolumn{1}{c|}{9} & \multicolumn{1}{c|}{28} & \multicolumn{1}{c|}{40} & 36.20 & 0.000 & \textbf{32.43}$^\blacktriangle$ & 0.040 & \textbf{32.43}$^\blacktriangle$ & 0.168 & \multicolumn{1}{c|}{32.43} \\ \cline{1-4} 
\multicolumn{1}{|c|}{4} & \multicolumn{1}{c|}{10} & \multicolumn{1}{c|}{31} & \multicolumn{1}{c|}{40} & 45.33 & 0.000 & 40.35 & 0.284 & \textbf{40.15} & 0.287 & \multicolumn{1}{c|}{40.05} \\ \cline{1-4} 
\multicolumn{1}{|c|}{4} & \multicolumn{1}{c|}{11} & \multicolumn{1}{c|}{34} & \multicolumn{1}{c|}{40} & 53.88 & 0.001 & 46.10 & 0.482 & \textbf{45.73} & 0.383 & \multicolumn{1}{c|}{45.60} \\ \cline{1-4} 
\multicolumn{1}{|c|}{4} & \multicolumn{1}{c|}{12} & \multicolumn{1}{c|}{37} & \multicolumn{1}{c|}{40} & 64.90 & 0.001 & 55.55 & 0.823 & \textbf{54.38} & 0.612 & \multicolumn{1}{c|}{$\ast$} \\ \cline{1-4} 
\multicolumn{1}{|c|}{5} & \multicolumn{1}{c|}{3} & \multicolumn{1}{c|}{13} & \multicolumn{1}{c|}{40} & \textbf{5.75}$^\blacktriangle$ & 0.000 & \textbf{5.75}$^\blacktriangle$ & 0.002 & \textbf{5.75}$^\blacktriangle$ & 0.002 & \multicolumn{1}{c|}{5.75} \\ \cline{1-4} 
\multicolumn{1}{|c|}{5} & \multicolumn{1}{c|}{4} & \multicolumn{1}{c|}{17} & \multicolumn{1}{c|}{40} & 11.10 & 0.000 & \textbf{10.98}$^\blacktriangle$ & 0.002 & \textbf{10.98}$^\blacktriangle$ & 0.003 & \multicolumn{1}{c|}{10.98} \\ \cline{1-4} 
\multicolumn{1}{|c|}{5} & \multicolumn{1}{c|}{5} & \multicolumn{1}{c|}{21} & \multicolumn{1}{c|}{40} & 16.10 & 0.000 & \textbf{15.58}$^\blacktriangle$ & 0.002 & \textbf{15.58}$^\blacktriangle$ & 0.005 & \multicolumn{1}{c|}{15.58} \\ \cline{1-4} 
\multicolumn{1}{|c|}{5} & \multicolumn{1}{c|}{6} & \multicolumn{1}{c|}{25} & \multicolumn{1}{c|}{40} & 22.10 & 0.000 & \textbf{21.05}$^\blacktriangle$ & 0.003 & \textbf{21.05}$^\blacktriangle$ & 0.028 & \multicolumn{1}{c|}{21.05} \\ \cline{1-4} 
\multicolumn{1}{|c|}{5} & \multicolumn{1}{c|}{7} & \multicolumn{1}{c|}{29} & \multicolumn{1}{c|}{40} & 30.00 & 0.000 & \textbf{27.53}$^\blacktriangle$ & 0.019 & 27.55 & 0.101 & \multicolumn{1}{c|}{27.53} \\ \cline{1-4} 
\multicolumn{1}{|c|}{5} & \multicolumn{1}{c|}{8} & \multicolumn{1}{c|}{33} & \multicolumn{1}{c|}{40} & 39.55 & 0.000 & 35.85 & 0.187 & \textbf{35.83} & 0.302 & \multicolumn{1}{c|}{35.80} \\ \cline{1-4} 
\multicolumn{1}{|c|}{5} & \multicolumn{1}{c|}{9} & \multicolumn{1}{c|}{37} & \multicolumn{1}{c|}{40} & 47.18 & 0.000 & 41.48 & 0.293 & \textbf{41.40} & 0.404 & \multicolumn{1}{c|}{41.25} \\ \cline{1-4} 
\multicolumn{1}{|c|}{5} & \multicolumn{1}{c|}{10} & \multicolumn{1}{c|}{41} & \multicolumn{1}{c|}{40} & 57.33 & 0.001 & 51.13 & 0.826 & \textbf{50.18} & 0.422 & \multicolumn{1}{c|}{$\ast$} \\ \cline{1-4} 
\multicolumn{1}{|c|}{5} & \multicolumn{1}{c|}{11} & \multicolumn{1}{c|}{45} & \multicolumn{1}{c|}{40} & 68.35 & 0.001 & 60.50 & 1.079 & \textbf{57.78} & 0.606 & \multicolumn{1}{c|}{$\ast$} \\ \cline{1-4} 
\multicolumn{1}{|c|}{5} & \multicolumn{1}{c|}{12} & \multicolumn{1}{c|}{49} & \multicolumn{1}{c|}{40} & 81.33 & 0.000 & 74.53 & 1.242 & \textbf{68.15} & 0.807 & \multicolumn{1}{c|}{$\ast$} \\ \cline{1-4} 
\multicolumn{1}{|c|}{6} & \multicolumn{1}{c|}{3} & \multicolumn{1}{c|}{16} & \multicolumn{1}{c|}{40} & \textbf{7.65}$^\blacktriangle$ & 0.000 & \textbf{7.65}$^\blacktriangle$ & 0.002 & \textbf{7.65}$^\blacktriangle$ & 0.004 & \multicolumn{1}{c|}{7.65} \\ \cline{1-4} 
\multicolumn{1}{|c|}{6} & \multicolumn{1}{c|}{4} & \multicolumn{1}{c|}{21} & \multicolumn{1}{c|}{40} & 12.20 & 0.000 & \textbf{12.03}$^\blacktriangle$ & 0.002 & \textbf{12.03}$^\blacktriangle$ & 0.003 & \multicolumn{1}{c|}{12.03} \\ \cline{1-4} 
\multicolumn{1}{|c|}{6} & \multicolumn{1}{c|}{5} & \multicolumn{1}{c|}{26} & \multicolumn{1}{c|}{40} & 20.08 & 0.000 & \textbf{19.33}$^\blacktriangle$ & 0.003 & \textbf{19.33}$^\blacktriangle$ & 0.012 & \multicolumn{1}{c|}{19.33} \\ \cline{1-4} 
\multicolumn{1}{|c|}{6} & \multicolumn{1}{c|}{6} & \multicolumn{1}{c|}{31} & \multicolumn{1}{c|}{40} & 27.58 & 0.000 & \textbf{25.98}$^\blacktriangle$ & 0.007 & \textbf{25.98}$^\blacktriangle$ & 0.071 & \multicolumn{1}{c|}{25.98} \\ \cline{1-4} 
\multicolumn{1}{|c|}{6} & \multicolumn{1}{c|}{7} & \multicolumn{1}{c|}{36} & \multicolumn{1}{c|}{40} & 37.90 & 0.000 & \textbf{34.43} & 0.182 & 34.45 & 0.305 & \multicolumn{1}{c|}{34.40} \\ \cline{1-4} 
\multicolumn{1}{|c|}{6} & \multicolumn{1}{c|}{8} & \multicolumn{1}{c|}{41} & \multicolumn{1}{c|}{40} & 47.13 & 0.001 & 42.50 & 0.424 & \textbf{42.25} & 0.295 & \multicolumn{1}{c|}{42.10} \\ \cline{1-4} 
\multicolumn{1}{|c|}{6} & \multicolumn{1}{c|}{9} & \multicolumn{1}{c|}{46} & \multicolumn{1}{c|}{40} & 58.45 & 0.001 & 52.93 & 0.986 & \textbf{51.15} & 0.516 & \multicolumn{1}{c|}{$\ast$} \\ \cline{1-4} 
\multicolumn{1}{|c|}{6} & \multicolumn{1}{c|}{10} & \multicolumn{1}{c|}{51} & \multicolumn{1}{c|}{40} & 70.43 & 0.002 & 64.58 & 1.197 & \textbf{61.50} & 0.793 & \multicolumn{1}{c|}{$\ast$} \\ \cline{1-4} 
\multicolumn{1}{|c|}{6} & \multicolumn{1}{c|}{11} & \multicolumn{1}{c|}{56} & \multicolumn{1}{c|}{40} & 84.18 & 0.003 & 76.43 & 1.273 & \textbf{71.35} & 0.972 & \multicolumn{1}{c|}{$\ast$} \\ \cline{1-4} 
\multicolumn{1}{|c|}{6} & \multicolumn{1}{c|}{12} & \multicolumn{1}{c|}{61} & \multicolumn{1}{c|}{40} & 99.75 & 0.000 & 93.08 & 1.271 & \textbf{82.73} & 0.738 & \multicolumn{1}{c|}{$\ast$} \\ \cline{1-4} 
\multicolumn{1}{|c|}{7} & \multicolumn{1}{c|}{3} & \multicolumn{1}{c|}{19} & \multicolumn{1}{c|}{40} & \textbf{8.95}$^\blacktriangle$ & 0.000 & \textbf{8.95}$^\blacktriangle$ & 0.002 & \textbf{8.95}$^\blacktriangle$ & 0.003 & \multicolumn{1}{c|}{8.95} \\ \cline{1-4} 
\multicolumn{1}{|c|}{7} & \multicolumn{1}{c|}{4} & \multicolumn{1}{c|}{25} & \multicolumn{1}{c|}{40} & 15.68 & 0.000 & \textbf{15.48}$^\blacktriangle$ & 0.002 & \textbf{15.48}$^\blacktriangle$ & 0.004 & \multicolumn{1}{c|}{15.48} \\ \cline{1-4} 
\multicolumn{1}{|c|}{7} & \multicolumn{1}{c|}{5} & \multicolumn{1}{c|}{31} & \multicolumn{1}{c|}{40} & 22.08 & 0.000 & \textbf{21.35}$^\blacktriangle$ & 0.004 & \textbf{21.35}$^\blacktriangle$ & 0.023 & \multicolumn{1}{c|}{21.35} \\ \cline{1-4} 
\multicolumn{1}{|c|}{7} & \multicolumn{1}{c|}{6} & \multicolumn{1}{c|}{37} & \multicolumn{1}{c|}{40} & 32.58 & 0.000 & \textbf{30.75}$^\blacktriangle$ & 0.031 & \textbf{30.75}$^\blacktriangle$ & 0.185 & \multicolumn{1}{c|}{30.75} \\ \cline{1-4} 
\multicolumn{1}{|c|}{7} & \multicolumn{1}{c|}{7} & \multicolumn{1}{c|}{43} & \multicolumn{1}{c|}{40} & 42.33 & 0.001 & 39.05 & 0.384 & \textbf{39.00} & 0.327 & \multicolumn{1}{c|}{38.95} \\ \cline{1-4} 
\multicolumn{1}{|c|}{7} & \multicolumn{1}{c|}{8} & \multicolumn{1}{c|}{49} & \multicolumn{1}{c|}{40} & 54.03 & 0.001 & 50.20 & 0.912 & \textbf{48.85} & 0.560 & \multicolumn{1}{c|}{$\ast$} \\ \cline{1-4} 
\multicolumn{1}{|c|}{7} & \multicolumn{1}{c|}{9} & \multicolumn{1}{c|}{55} & \multicolumn{1}{c|}{40} & 69.00 & 0.002 & 63.58 & 1.221 & \textbf{61.03} & 0.853 & \multicolumn{1}{c|}{$\ast$} \\ \cline{1-4} 
\multicolumn{1}{|c|}{7} & \multicolumn{1}{c|}{10} & \multicolumn{1}{c|}{61} & \multicolumn{1}{c|}{40} & 81.05 & 0.003 & 75.95 & 1.230 & \textbf{70.15} & 0.607 & \multicolumn{1}{c|}{$\ast$} \\ \cline{1-4} 
\multicolumn{1}{|c|}{7} & \multicolumn{1}{c|}{11} & \multicolumn{1}{c|}{67} & \multicolumn{1}{c|}{40} & 97.33 & 0.000 & 91.98 & 1.259 & \textbf{83.58} & 0.849 & \multicolumn{1}{c|}{$\ast$} \\ \cline{1-4} 
\multicolumn{1}{|c|}{7} & \multicolumn{1}{c|}{12} & \multicolumn{1}{c|}{73} & \multicolumn{1}{c|}{40} & 115.88 & 0.008 & 110.63 & 1.403 & \textbf{98.05} & 0.977 & \multicolumn{1}{c|}{$\ast$} \\ \hline 
\end{tabular}
}
\end{center}
\scriptsize{$^a$ CPU Intel Core i7-3632QM 2.20GHz with 1.0 GB RAM}
\caption{Computational results on the set {\sc group 2} (a)}
\label{tab:g2a}
\end{table}

\begin{table}
\begin{center}
{\scriptsize
\begin{tabular}{cccc|cc|cc|cc|c}
\cline{5-10}
 & & & \multicolumn{1}{c|}{} & \multicolumn{2}{c|}{BL$^a$} & \multicolumn{2}{c|}{BBT$^a$}  & \multicolumn{2}{c|}{BBS$^a$} & \multicolumn{1}{c}{} \\ \hline
\multicolumn{1}{|c|}{$w$} & \multicolumn{1}{c|}{$h$} & \multicolumn{1}{c|}{$n$} & \multicolumn{1}{c|}{{\scriptsize \#}{\sc i}}  & \multicolumn{1}{c|}{Resh} & \multicolumn{1}{c|}{Time} & \multicolumn{1}{c|}{Resh} & \multicolumn{1}{c|}{Time} & \multicolumn{1}{c|}{Resh} & \multicolumn{1}{c|}{Time} & \multicolumn{1}{c|}{Opt}   \\ \hline
\multicolumn{1}{|c|}{8} & \multicolumn{1}{c|}{3} & \multicolumn{1}{c|}{22} & \multicolumn{1}{c|}{40} & \textbf{9.73}$^\blacktriangle$ & 0.000 & \textbf{9.73}$^\blacktriangle$ & 0.002 & \textbf{9.73}$^\blacktriangle$ & 0.004 & \multicolumn{1}{c|}{9.73} \\ \cline{1-4} 
\multicolumn{1}{|c|}{8} & \multicolumn{1}{c|}{4} & \multicolumn{1}{c|}{29} & \multicolumn{1}{c|}{40} & 18.18 & 0.000 & \textbf{17.95}$^\blacktriangle$ & 0.003 & \textbf{17.95}$^\blacktriangle$ & 0.008 & \multicolumn{1}{c|}{17.95} \\ \cline{1-4} 
\multicolumn{1}{|c|}{8} & \multicolumn{1}{c|}{5} & \multicolumn{1}{c|}{36} & \multicolumn{1}{c|}{40} & 26.00 & 0.000 & \textbf{25.40}$^\blacktriangle$ & 0.005 & \textbf{25.40}$^\blacktriangle$ & 0.047 & \multicolumn{1}{c|}{25.40} \\ \cline{1-4} 
\multicolumn{1}{|c|}{8} & \multicolumn{1}{c|}{6} & \multicolumn{1}{c|}{43} & \multicolumn{1}{c|}{40} & 38.20 & 0.000 & 35.75 & 0.173 & \textbf{35.70} & 0.182 & \multicolumn{1}{c|}{35.68} \\ \cline{1-4} 
\multicolumn{1}{|c|}{8} & \multicolumn{1}{c|}{7} & \multicolumn{1}{c|}{50} & \multicolumn{1}{c|}{40} & 48.90 & 0.001 & 44.90 & 0.480 & \textbf{44.63} & 0.451 & \multicolumn{1}{c|}{44.48} \\ \cline{1-4} 
\multicolumn{1}{|c|}{8} & \multicolumn{1}{c|}{8} & \multicolumn{1}{c|}{57} & \multicolumn{1}{c|}{40} & 62.08 & 0.002 & 57.53 & 1.125 & \textbf{55.70} & 0.738 & \multicolumn{1}{c|}{$\ast$} \\ \cline{1-4} 
\multicolumn{1}{|c|}{8} & \multicolumn{1}{c|}{9} & \multicolumn{1}{c|}{64} & \multicolumn{1}{c|}{40} & 77.00 & 0.003 & 72.23 & 1.218 & \textbf{68.45} & 0.633 & \multicolumn{1}{c|}{$\ast$} \\ \cline{1-4} 
\multicolumn{1}{|c|}{8} & \multicolumn{1}{c|}{10} & \multicolumn{1}{c|}{71} & \multicolumn{1}{c|}{40} & 94.60 & 0.000 & 87.83 & 1.244 & \textbf{80.90} & 0.882 & \multicolumn{1}{c|}{$\ast$} \\ \cline{1-4} 
\multicolumn{1}{|c|}{8} & \multicolumn{1}{c|}{11} & \multicolumn{1}{c|}{78} & \multicolumn{1}{c|}{40} & 114.53 & 0.000 & 109.93 & 1.367 & \textbf{99.10} & 0.973 & \multicolumn{1}{c|}{$\ast$} \\ \cline{1-4} 
\multicolumn{1}{|c|}{8} & \multicolumn{1}{c|}{12} & \multicolumn{1}{c|}{85} & \multicolumn{1}{c|}{40} & 136.63 & 0.000 & 131.85 & 1.476 & \textbf{113.50} & 0.984 & \multicolumn{1}{c|}{$\ast$} \\ \cline{1-4} 
\multicolumn{1}{|c|}{9} & \multicolumn{1}{c|}{3} & \multicolumn{1}{c|}{25} & \multicolumn{1}{c|}{40} & \textbf{11.45}$^\blacktriangle$ & 0.000 & \textbf{11.45}$^\blacktriangle$ & 0.002 & \textbf{11.45}$^\blacktriangle$ & 0.004 & \multicolumn{1}{c|}{11.45} \\ \cline{1-4} 
\multicolumn{1}{|c|}{9} & \multicolumn{1}{c|}{4} & \multicolumn{1}{c|}{33} & \multicolumn{1}{c|}{40} & 19.53 & 0.000 & \textbf{19.15}$^\blacktriangle$ & 0.003 & \textbf{19.15}$^\blacktriangle$ & 0.009 & \multicolumn{1}{c|}{19.15} \\ \cline{1-4} 
\multicolumn{1}{|c|}{9} & \multicolumn{1}{c|}{5} & \multicolumn{1}{c|}{41} & \multicolumn{1}{c|}{40} & 29.40 & 0.000 & \textbf{28.65}$^\blacktriangle$ & 0.014 & \textbf{28.65}$^\blacktriangle$ & 0.048 & \multicolumn{1}{c|}{28.65} \\ \cline{1-4} 
\multicolumn{1}{|c|}{9} & \multicolumn{1}{c|}{6} & \multicolumn{1}{c|}{49} & \multicolumn{1}{c|}{40} & 42.43 & 0.001 & 39.80 & 0.292 & \textbf{39.70} & 0.310 & \multicolumn{1}{c|}{39.63} \\ \cline{1-4} 
\multicolumn{1}{|c|}{9} & \multicolumn{1}{c|}{7} & \multicolumn{1}{c|}{57} & \multicolumn{1}{c|}{40} & 55.18 & 0.001 & 51.65 & 0.872 & \textbf{50.73} & 0.645 & \multicolumn{1}{c|}{$\ast$} \\ \cline{1-4} 
\multicolumn{1}{|c|}{9} & \multicolumn{1}{c|}{8} & \multicolumn{1}{c|}{65} & \multicolumn{1}{c|}{40} & 71.33 & 0.003 & 67.70 & 1.185 & \textbf{64.48} & 0.603 & \multicolumn{1}{c|}{$\ast$} \\ \cline{1-4} 
\multicolumn{1}{|c|}{9} & \multicolumn{1}{c|}{9} & \multicolumn{1}{c|}{73} & \multicolumn{1}{c|}{40} & 89.45 & 0.000 & 84.23 & 1.274 & \textbf{78.75} & 0.849 & \multicolumn{1}{c|}{$\ast$} \\ \cline{1-4} 
\multicolumn{1}{|c|}{9} & \multicolumn{1}{c|}{10} & \multicolumn{1}{c|}{81} & \multicolumn{1}{c|}{40} & 108.05 & 0.000 & 103.20 & 1.408 & \textbf{93.80} & 0.811 & \multicolumn{1}{c|}{$\ast$} \\ \cline{1-4} 
\multicolumn{1}{|c|}{9} & \multicolumn{1}{c|}{11} & \multicolumn{1}{c|}{89} & \multicolumn{1}{c|}{40} & 126.45 & 0.000 & 122.63 & 1.392 & \textbf{108.75} & 0.995 & \multicolumn{1}{c|}{$\ast$} \\ \cline{1-4} 
\multicolumn{1}{|c|}{9} & \multicolumn{1}{c|}{12} & \multicolumn{1}{c|}{97} & \multicolumn{1}{c|}{40} & 149.58 & 0.000 & 146.25 & 1.515 & \textbf{128.30} & 1.008 & \multicolumn{1}{c|}{$\ast$} \\ \cline{1-4} 
\multicolumn{1}{|c|}{10} & \multicolumn{1}{c|}{3} & \multicolumn{1}{c|}{28} & \multicolumn{1}{c|}{40} & \textbf{11.88}$^\blacktriangle$ & 0.000 & \textbf{11.88}$^\blacktriangle$ & 0.002 & \textbf{11.88}$^\blacktriangle$ & 0.005 & \multicolumn{1}{c|}{11.88} \\ \cline{1-4} 
\multicolumn{1}{|c|}{10} & \multicolumn{1}{c|}{4} & \multicolumn{1}{c|}{37} & \multicolumn{1}{c|}{40} & 22.75 & 0.000 & \textbf{22.35}$^\blacktriangle$ & 0.003 & \textbf{22.35}$^\blacktriangle$ & 0.009 & \multicolumn{1}{c|}{22.35} \\ \cline{1-4} 
\multicolumn{1}{|c|}{10} & \multicolumn{1}{c|}{5} & \multicolumn{1}{c|}{46} & \multicolumn{1}{c|}{40} & 32.95 & 0.000 & \textbf{31.73} & 0.091 & \textbf{31.73} & 0.115 & \multicolumn{1}{c|}{31.70} \\ \cline{1-4} 
\multicolumn{1}{|c|}{10} & \multicolumn{1}{c|}{6} & \multicolumn{1}{c|}{55} & \multicolumn{1}{c|}{40} & 46.03 & 0.001 & 43.93 & 0.448 & \textbf{43.58} & 0.344 & \multicolumn{1}{c|}{43.55} \\ \cline{1-4} 
\multicolumn{1}{|c|}{10} & \multicolumn{1}{c|}{7} & \multicolumn{1}{c|}{64} & \multicolumn{1}{c|}{40} & 62.28 & 0.002 & 58.08 & 1.053 & \textbf{56.38} & 0.482 & \multicolumn{1}{c|}{$\ast$} \\ \cline{1-4} 
\multicolumn{1}{|c|}{10} & \multicolumn{1}{c|}{8} & \multicolumn{1}{c|}{73} & \multicolumn{1}{c|}{40} & 77.25 & 0.004 & 72.75 & 1.375 & \textbf{69.28} & 0.730 & \multicolumn{1}{c|}{$\ast$} \\ \cline{1-4} 
\multicolumn{1}{|c|}{10} & \multicolumn{1}{c|}{9} & \multicolumn{1}{c|}{82} & \multicolumn{1}{c|}{40} & 97.18 & 0.007 & 93.00 & 1.493 & \textbf{86.43} & 0.782 & \multicolumn{1}{c|}{$\ast$} \\ \cline{1-4} 
\multicolumn{1}{|c|}{10} & \multicolumn{1}{c|}{10} & \multicolumn{1}{c|}{91} & \multicolumn{1}{c|}{40} & 117.65 & 0.012 & 114.10 & 1.501 & \textbf{102.40} & 0.978 & \multicolumn{1}{c|}{$\ast$} \\ \cline{1-4} 
\multicolumn{1}{|c|}{10} & \multicolumn{1}{c|}{11} & \multicolumn{1}{c|}{100} & \multicolumn{1}{c|}{40} & 140.25 & 0.000 & 138.63 & 1.502 & \textbf{121.55} & 0.727 & \multicolumn{1}{c|}{$\ast$} \\ \cline{1-4} 
\multicolumn{1}{|c|}{10} & \multicolumn{1}{c|}{12} & \multicolumn{1}{c|}{109} & \multicolumn{1}{c|}{40} & 160.08 & 0.000 & 158.08 & 1.470 & \textbf{137.68} & 0.896 & \multicolumn{1}{c|}{$\ast$} \\ \cline{1-4} 
\multicolumn{1}{|c|}{11} & \multicolumn{1}{c|}{3} & \multicolumn{1}{c|}{31} & \multicolumn{1}{c|}{40} & 14.23 & 0.000 & \textbf{14.13}$^\blacktriangle$ & 0.002 & \textbf{14.13}$^\blacktriangle$ & 0.006 & \multicolumn{1}{c|}{14.13} \\ \cline{1-4} 
\multicolumn{1}{|c|}{11} & \multicolumn{1}{c|}{4} & \multicolumn{1}{c|}{41} & \multicolumn{1}{c|}{40} & 23.90 & 0.000 & \textbf{23.43}$^\blacktriangle$ & 0.004 & \textbf{23.43}$^\blacktriangle$ & 0.012 & \multicolumn{1}{c|}{23.43} \\ \cline{1-4} 
\multicolumn{1}{|c|}{11} & \multicolumn{1}{c|}{5} & \multicolumn{1}{c|}{51} & \multicolumn{1}{c|}{40} & 36.13 & 0.000 & \textbf{35.05}$^\blacktriangle$ & 0.044 & \textbf{35.05}$^\blacktriangle$ & 0.135 & \multicolumn{1}{c|}{35.05} \\ \cline{1-4} 
\multicolumn{1}{|c|}{11} & \multicolumn{1}{c|}{6} & \multicolumn{1}{c|}{61} & \multicolumn{1}{c|}{40} & 50.30 & 0.001 & 48.00 & 0.783 & \textbf{47.13} & 0.329 & \multicolumn{1}{c|}{$\ast$} \\ \cline{1-4} 
\multicolumn{1}{|c|}{11} & \multicolumn{1}{c|}{7} & \multicolumn{1}{c|}{71} & \multicolumn{1}{c|}{40} & 68.23 & 0.003 & 64.75 & 1.260 & \textbf{62.33} & 0.584 & \multicolumn{1}{c|}{$\ast$} \\ \cline{1-4} 
\multicolumn{1}{|c|}{11} & \multicolumn{1}{c|}{8} & \multicolumn{1}{c|}{81} & \multicolumn{1}{c|}{40} & 87.45 & 0.005 & 83.95 & 1.438 & \textbf{79.13} & 0.601 & \multicolumn{1}{c|}{$\ast$} \\ \cline{1-4} 
\multicolumn{1}{|c|}{11} & \multicolumn{1}{c|}{9} & \multicolumn{1}{c|}{91} & \multicolumn{1}{c|}{40} & 110.15 & 0.010 & 105.75 & 1.475 & \textbf{96.75} & 0.918 & \multicolumn{1}{c|}{$\ast$} \\ \cline{1-4} 
\multicolumn{1}{|c|}{11} & \multicolumn{1}{c|}{10} & \multicolumn{1}{c|}{101} & \multicolumn{1}{c|}{40} & 130.95 & 0.016 & 127.78 & 1.506 & \textbf{115.15} & 0.639 & \multicolumn{1}{c|}{$\ast$} \\ \cline{1-4} 
\multicolumn{1}{|c|}{11} & \multicolumn{1}{c|}{11} & \multicolumn{1}{c|}{111} & \multicolumn{1}{c|}{40} & 153.33 & 0.027 & 150.85 & 1.367 & \textbf{133.03} & 0.891 & \multicolumn{1}{c|}{$\ast$} \\ \cline{1-4} 
\multicolumn{1}{|c|}{11} & \multicolumn{1}{c|}{12} & \multicolumn{1}{c|}{121} & \multicolumn{1}{c|}{40} & 180.15 & 0.000 & 179.15 & 1.516 & \textbf{156.05} & 0.621 & \multicolumn{1}{c|}{$\ast$} \\ \cline{1-4} 
\multicolumn{1}{|c|}{12} & \multicolumn{1}{c|}{3} & \multicolumn{1}{c|}{34} & \multicolumn{1}{c|}{40} & 14.95 & 0.000 & \textbf{14.90}$^\blacktriangle$ & 0.002 & \textbf{14.90}$^\blacktriangle$ & 0.006 & \multicolumn{1}{c|}{14.90} \\ \cline{1-4} 
\multicolumn{1}{|c|}{12} & \multicolumn{1}{c|}{4} & \multicolumn{1}{c|}{45} & \multicolumn{1}{c|}{40} & 27.15 & 0.000 & \textbf{26.80}$^\blacktriangle$ & 0.004 & \textbf{26.80}$^\blacktriangle$ & 0.023 & \multicolumn{1}{c|}{26.80} \\ \cline{1-4} 
\multicolumn{1}{|c|}{12} & \multicolumn{1}{c|}{5} & \multicolumn{1}{c|}{56} & \multicolumn{1}{c|}{40} & 40.48 & 0.001 & 39.10 & 0.309 & \textbf{39.05} & 0.282 & \multicolumn{1}{c|}{39.00} \\ \cline{1-4} 
\multicolumn{1}{|c|}{12} & \multicolumn{1}{c|}{6} & \multicolumn{1}{c|}{67} & \multicolumn{1}{c|}{40} & 55.40 & 0.001 & 52.43 & 0.748 & \textbf{51.75} & 0.360 & \multicolumn{1}{c|}{$\ast$} \\ \cline{1-4} 
\multicolumn{1}{|c|}{12} & \multicolumn{1}{c|}{7} & \multicolumn{1}{c|}{78} & \multicolumn{1}{c|}{40} & 74.75 & 0.003 & 71.60 & 1.446 & \textbf{69.15} & 0.809 & \multicolumn{1}{c|}{$\ast$} \\ \cline{1-4} 
\multicolumn{1}{|c|}{12} & \multicolumn{1}{c|}{8} & \multicolumn{1}{c|}{89} & \multicolumn{1}{c|}{40} & 95.65 & 0.007 & 92.18 & 1.500 & \textbf{86.45} & 0.753 & \multicolumn{1}{c|}{$\ast$} \\ \cline{1-4} 
\multicolumn{1}{|c|}{12} & \multicolumn{1}{c|}{9} & \multicolumn{1}{c|}{100} & \multicolumn{1}{c|}{40} & 117.63 & 0.013 & 112.85 & 1.548 & \textbf{104.25} & 0.591 & \multicolumn{1}{c|}{$\ast$} \\ \cline{1-4} 
\multicolumn{1}{|c|}{12} & \multicolumn{1}{c|}{10} & \multicolumn{1}{c|}{111} & \multicolumn{1}{c|}{40} & 137.40 & 0.000 & 135.63 & 1.524 & \textbf{121.25} & 0.812 & \multicolumn{1}{c|}{$\ast$} \\ \cline{1-4} 
\multicolumn{1}{|c|}{12} & \multicolumn{1}{c|}{11} & \multicolumn{1}{c|}{122} & \multicolumn{1}{c|}{40} & 170.60 & 0.000 & 169.35 & 1.506 & \textbf{147.55} & 0.580 & \multicolumn{1}{c|}{$\ast$} \\ \cline{1-4} 
\multicolumn{1}{|c|}{12} & \multicolumn{1}{c|}{12} & \multicolumn{1}{c|}{133} & \multicolumn{1}{c|}{40} & 198.80 & 0.000 & 198.10 & 1.685 & \textbf{171.68} & 0.849 & \multicolumn{1}{c|}{$\ast$} \\ \hline 
\end{tabular}
}
\end{center}
\scriptsize{$^a$ CPU Intel Core i7-3632QM 2.20GHz with 1.0 GB RAM}
\caption{Computational results on the set {\sc group 2} (b)}
\label{tab:g2b}
\end{table}

\begin{figure}[htbp] 
  \begin{center}
\includegraphics[width=7cm,height=5cm]{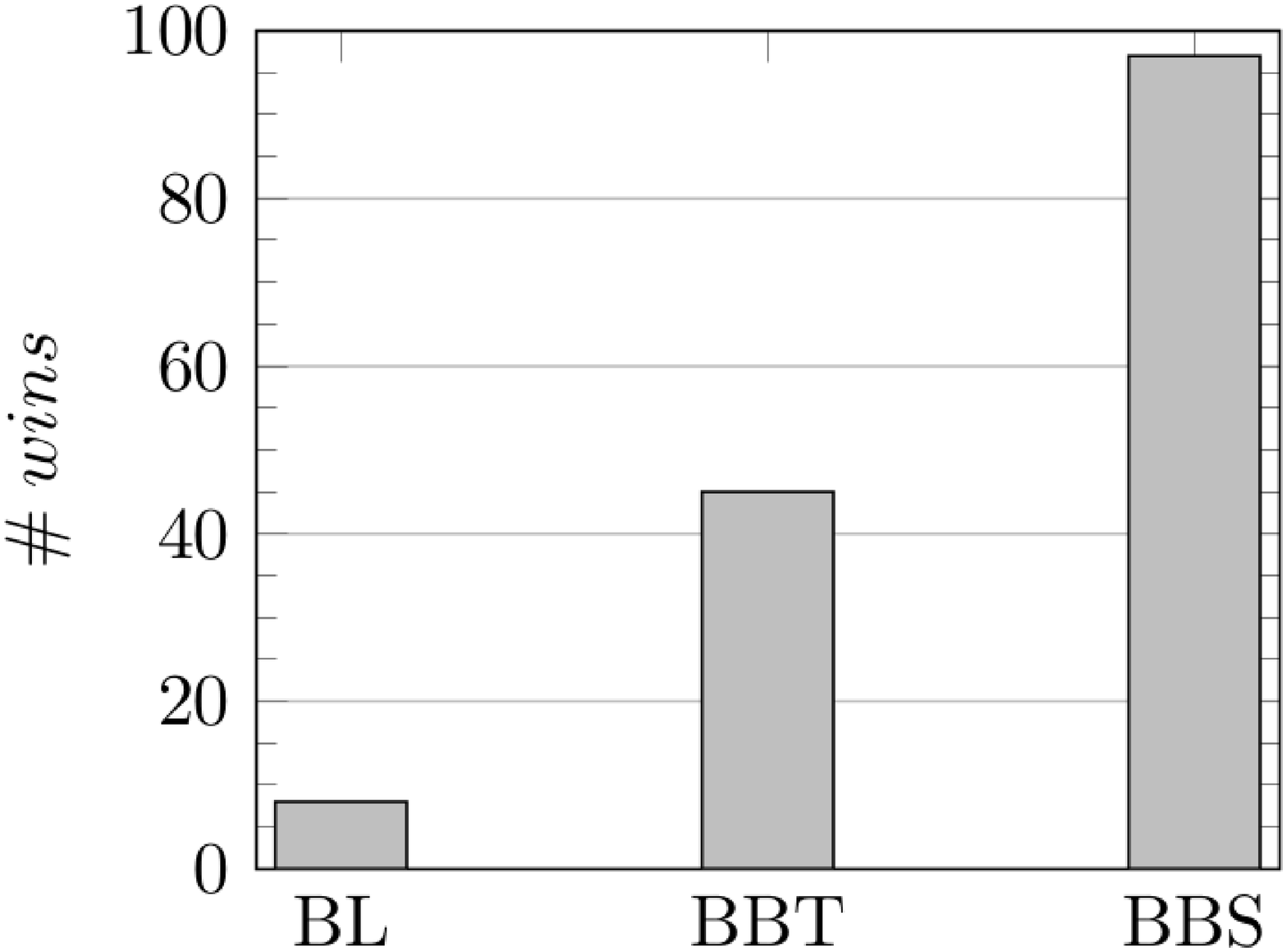}
    \end{center}
\caption{Aggregated results on the set {\sc group 2}}
\label{fig:g2}
\end{figure}

In Table \ref{tab:g3} and Figure \ref{fig:g3} the computational results obtained on the instances
of {\sc group 3} are shown.
Column {\sc id} indicates if the instances are Random (R) or Upside down (U), while column b reports the number of bays.
Symbol ``-'' in Columns Resh and Time for the iterative deepening A* restricted by \citet{ZQLZ2012} (IDA$^*$-R) means that the value of the solution and the computing time are not available in the literature.
For most of the instances, the Bounded Beam Search approach produces the best
results. Observe that the R instances turn out to be
harder to solve than the U ones. 
 
\begin{table}
\begin{center}
{\fontsize{6}{8}\selectfont
\begin{tabular}{cccccc|cc|cc|cc|cc|cc|c|}
\cline{7-16}
 & & & & & \multicolumn{1}{c|}{} & \multicolumn{2}{c|}{3PH$^a$} & \multicolumn{2}{c|}{IDA$^*$-R$^b$} & \multicolumn{2}{c|}{BL$^c$} & \multicolumn{2}{c|}{BBT$^c$} & \multicolumn{2}{c|}{BBS$^c$} & \multicolumn{1}{c}{} \\ \hline
\multicolumn{1}{|c|}{{\sc id}}          & \multicolumn{1}{c|}{$b$} & \multicolumn{1}{c|}{$w$} & \multicolumn{1}{c|}{$h$} & \multicolumn{1}{c|}{$n$} & {\scriptsize \#}{\sc i} & \multicolumn{1}{c|}{Resh} & \multicolumn{1}{c|}{Time} & \multicolumn{1}{c|}{Resh} & \multicolumn{1}{c|}{Time}  & \multicolumn{1}{c|}{Resh} & \multicolumn{1}{c|}{Time} & \multicolumn{1}{c|}{Resh} & \multicolumn{1}{c|}{Time} & \multicolumn{1}{c|}{Resh} & \multicolumn{1}{c|}{Time} & \multicolumn{1}{c|}{Opt}   \\ \hline
\multicolumn{1}{|c|}{R} & \multicolumn{1}{c|}{1} & \multicolumn{1}{c|}{16} & \multicolumn{1}{c|}{6} & \multicolumn{1}{c|}{70} & \multicolumn{1}{c|}{5} & 55.40 & 8204.32 & 40.00 & - & 39.80 & 0.00 & 40.00 & 0.59 & \textbf{39.60} & 0.60 & \multicolumn{1}{c|}{$\ast$} \\ \cline{1-6} 
\multicolumn{1}{|c|}{R} & \multicolumn{1}{c|}{1} & \multicolumn{1}{c|}{16} & \multicolumn{1}{c|}{8} & \multicolumn{1}{c|}{90} & \multicolumn{1}{c|}{5} & 101.40 & 13353.36  & 63.00 & - & 63.00 & 0.01 & 63.40 & 2.63 & \textbf{61.40} & 0.75 & \multicolumn{1}{c|}{$\ast$} \\ \cline{1-6} 
\multicolumn{1}{|c|}{R} & \multicolumn{1}{c|}{2} & \multicolumn{1}{c|}{32} & \multicolumn{1}{c|}{6} & \multicolumn{1}{c|}{140} & \multicolumn{1}{c|}{5} & 90.20 & 21579.55  & - & - & 71.80 & 0.01 & 71.80 & 0.45 & \textbf{71.60} & 0.11 & \multicolumn{1}{c|}{$\ast$} \\ \cline{1-6} 
\multicolumn{1}{|c|}{R} & \multicolumn{1}{c|}{2} & \multicolumn{1}{c|}{32} & \multicolumn{1}{c|}{8} & \multicolumn{1}{c|}{190} & \multicolumn{1}{c|}{5} & 177.80 & 21527.71  & - & - & 125.40 & 0.00 & 125.40 & 9.16 & \textbf{124.00} & 0.80 & \multicolumn{1}{c|}{$\ast$} \\ \cline{1-6} 
\multicolumn{1}{|c|}{R} & \multicolumn{1}{c|}{4} & \multicolumn{1}{c|}{64} & \multicolumn{1}{c|}{6} & \multicolumn{1}{c|}{280} & \multicolumn{1}{c|}{5} & 174.20 & 21522.11  & - & - & \textbf{151.60} & 0.09 & 152.60 & 71.65 & 152.40 & 0.60 & \multicolumn{1}{c|}{$\ast$} \\ \cline{1-6} 
\multicolumn{1}{|c|}{R} & \multicolumn{1}{c|}{4} & \multicolumn{1}{c|}{64} & \multicolumn{1}{c|}{8} & \multicolumn{1}{c|}{380} & \multicolumn{1}{c|}{5} & 389.20 & 21231.08  & - & - & \textbf{242.80} & 0.00 & 245.40 & 9.39 & 243.60 & 1.00 & \multicolumn{1}{c|}{$\ast$} \\ \cline{1-6} 
\multicolumn{1}{|c|}{R} & \multicolumn{1}{c|}{6} & \multicolumn{1}{c|}{96} & \multicolumn{1}{c|}{6} & \multicolumn{1}{c|}{430} & \multicolumn{1}{c|}{5} & 279.8 & 21358.80  & - & - & \textbf{226.60} & 0.00 & \textbf{226.60} & 6.45 & \textbf{226.60} & 0.40 & \multicolumn{1}{c|}{$\ast$} \\ \cline{1-6} 
\multicolumn{1}{|c|}{R} & \multicolumn{1}{c|}{6} & \multicolumn{1}{c|}{96} & \multicolumn{1}{c|}{8} & \multicolumn{1}{c|}{570} & \multicolumn{1}{c|}{5} & 672.00 & 20874.03  & - & - & 352.40 & 0.00 & 353.60 & 116.88 & \textbf{352.00} & 0.60 & \multicolumn{1}{c|}{$\ast$} \\ \cline{1-6} 
\multicolumn{1}{|c|}{R} & \multicolumn{1}{c|}{8} & \multicolumn{1}{c|}{128} & \multicolumn{1}{c|}{6} & \multicolumn{1}{c|}{570} & \multicolumn{1}{c|}{5} & 375.40 & 21200.97  & - & - & \textbf{305.60}$^\blacktriangle$ & 0.00 & \textbf{305.60}$^\blacktriangle$ & 0.16 & \textbf{305.60}$^\blacktriangle$ & 0.00 & \multicolumn{1}{c|}{305.60} \\ \cline{1-6} 
\multicolumn{1}{|c|}{R} & \multicolumn{1}{c|}{10} & \multicolumn{1}{c|}{160} & \multicolumn{1}{c|}{6} & \multicolumn{1}{c|}{720} & \multicolumn{1}{c|}{5} & 449.20 & 20844.76  & - & - & \textbf{372.60} & 0.22 & 372.80 & 7.20 & 372.80 & 0.21 & \multicolumn{1}{c|}{$\ast$} \\ \cline{1-6} 
\multicolumn{1}{|c|}{U} & \multicolumn{1}{c|}{1} & \multicolumn{1}{c|}{16} & \multicolumn{1}{c|}{6} & \multicolumn{1}{c|}{70} & \multicolumn{1}{c|}{2} & 57.50 & 14284.86  & \textbf{56.50}$^\blacktriangle$ & -  & \textbf{56.50}$^\blacktriangle$ & 0.00 & \textbf{56.50}$^\blacktriangle$ & 0.00 & \textbf{56.50}$^\blacktriangle$ & 0.00 & \multicolumn{1}{c|}{56.50} \\ \cline{1-6} 
\multicolumn{1}{|c|}{U} & \multicolumn{1}{c|}{1} & \multicolumn{1}{c|}{16} & \multicolumn{1}{c|}{8} & \multicolumn{1}{c|}{90} & \multicolumn{1}{c|}{2} & 87.50 & 14804.06  & \textbf{77.00}$^\blacktriangle$ & - & \textbf{77.00}$^\blacktriangle$ & 0.00 & \textbf{77.00}$^\blacktriangle$ & 0.00 & \textbf{77.00}$^\blacktriangle$ & 0.00 & \multicolumn{1}{c|}{77.00} \\ \cline{1-6} 
\multicolumn{1}{|c|}{U} & \multicolumn{1}{c|}{2} & \multicolumn{1}{c|}{32} & \multicolumn{1}{c|}{6} & \multicolumn{1}{c|}{140} & \multicolumn{1}{c|}{2} & 117.00 & 21543.79  & - & - & \textbf{109.00}$^\blacktriangle$ & 0.00 & \textbf{109.00}$^\blacktriangle$ & 0.04 & \textbf{109.00}$^\blacktriangle$ & 0.00 & \multicolumn{1}{c|}{109.00} \\ \cline{1-6} 
\multicolumn{1}{|c|}{U} & \multicolumn{1}{c|}{2} & \multicolumn{1}{c|}{32} & \multicolumn{1}{c|}{8} & \multicolumn{1}{c|}{190} & \multicolumn{1}{c|}{2} & 210.50 & 21371.03  & - & - & \textbf{160.00}$^\blacktriangle$ & 0.01 & \textbf{160.00}$^\blacktriangle$ & 0.06 & \textbf{160.00}$^\blacktriangle$ & 0.00 & \multicolumn{1}{c|}{160.00} \\ \cline{1-6} 
\multicolumn{1}{|c|}{U} & \multicolumn{1}{c|}{4} & \multicolumn{1}{c|}{64} & \multicolumn{1}{c|}{6} & \multicolumn{1}{c|}{280} & \multicolumn{1}{c|}{2} & 227.50 & 21321.99  & - & -  & \textbf{216.50}$^\blacktriangle$ & 0.00 & \textbf{216.50}$^\blacktriangle$ & 0.07 & \textbf{216.50}$^\blacktriangle$ & 0.00 & \multicolumn{1}{c|}{216.50} \\ \cline{1-6} 
\multicolumn{1}{|c|}{U} & \multicolumn{1}{c|}{4} & \multicolumn{1}{c|}{64} & \multicolumn{1}{c|}{8} & \multicolumn{1}{c|}{380} & \multicolumn{1}{c|}{2} & 420.00 & 20890.37  & - & - & \textbf{318.50}$^\blacktriangle$ & 0.00 & \textbf{318.50}$^\blacktriangle$ & 0.07 & \textbf{318.50}$^\blacktriangle$ & 0.00 & \multicolumn{1}{c|}{318.50} \\ \cline{1-6} 
\multicolumn{1}{|c|}{U} & \multicolumn{1}{c|}{6} & \multicolumn{1}{c|}{96} & \multicolumn{1}{c|}{6} & \multicolumn{1}{c|}{430} & \multicolumn{1}{c|}{2} & 363.00 & 20929.72  & - & - & \textbf{334.00}$^\blacktriangle$ & 0.00 & \textbf{334.00}$^\blacktriangle$ & 0.17 & \textbf{334.00}$^\blacktriangle$ & 0.00 & \multicolumn{1}{c|}{334.00} \\ \cline{1-6} 
\multicolumn{1}{|c|}{U} & \multicolumn{1}{c|}{6} & \multicolumn{1}{c|}{96} & \multicolumn{1}{c|}{8} & \multicolumn{1}{c|}{570} & \multicolumn{1}{c|}{2} & 664.00 & 20066.65  & - & - & \textbf{476.00}$^\blacktriangle$ & 0.00 & \textbf{476.00}$^\blacktriangle$ & 0.21 & \textbf{476.00}$^\blacktriangle$ & 0.00 & \multicolumn{1}{c|}{476.00} \\ \cline{1-6} 
\multicolumn{1}{|c|}{U} & \multicolumn{1}{c|}{8} & \multicolumn{1}{c|}{128} & \multicolumn{1}{c|}{6} & \multicolumn{1}{c|}{570} & \multicolumn{1}{c|}{2} & 474.00 & 20504.17  & - & - & \textbf{443.50}$^\blacktriangle$ & 0.00 & \textbf{443.50}$^\blacktriangle$ & 0.26 & \textbf{443.50}$^\blacktriangle$ & 0.01 & \multicolumn{1}{c|}{443.50} \\ \cline{1-6} 
\multicolumn{1}{|c|}{U} & \multicolumn{1}{c|}{10} & \multicolumn{1}{c|}{160} & \multicolumn{1}{c|}{6} & \multicolumn{1}{c|}{720} & \multicolumn{1}{c|}{2} & 625.00 & 19722.45  & - & - & \textbf{561.00}$^\blacktriangle$ & 0.00 & \textbf{561.00}$^\blacktriangle$ & 0.32 & \textbf{561.00}$^\blacktriangle$ & 0.01 & \multicolumn{1}{c|}{561.00} \\ \hline 
\end{tabular}
}
\end{center}
{\scriptsize $^a$Core 2 Duo E8500 3.42GHz and 3.46 GB RAM}\\
{\scriptsize $^b$Intel Core i7-920 2.66GHz and 12 GB RAM}\\
{\scriptsize $^c$Intel Core i7-3632QM 2.2GHz and 1 GB RAM}
\caption{Computational results on the set {\sc group 3}}
\label{tab:g3}
\end{table}

\begin{figure}[htbp] 
  \begin{center}
\includegraphics[width=7cm,height=5cm]{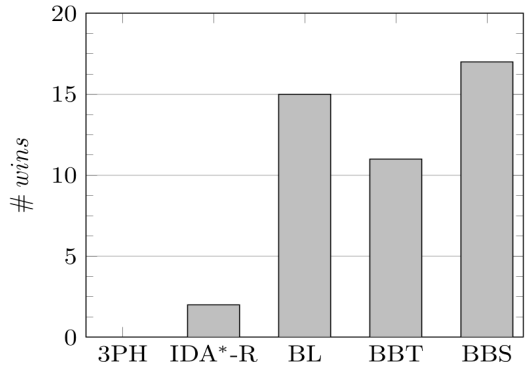}

    \end{center}
\caption{Aggregated results on the set {\sc group 3}}
\label{fig:g3}
\end{figure}

Table \ref{tab:g4} and Figure \ref{fig:g4} report the computational results on the instances of {\sc group 4}.
Here, we did not consider the results in \citet{CVS2011} since, as reported in \citet{ZQLZ2012}, they contain some errors.
Symbol ``-'' in Columns Resh and Time for the Matrix-Algorithm by \citet{CSV2009} (MA) and the iterative deepening A* restricted by \citet{ZQLZ2012} (IDA$^*$-R) means that the value of the solution and the computing time are not available in the literature.
Here, the Bounded Beam Search method is always the best approach and in most of the cases it finds the optimal solution.

\begin{table}
\begin{center}
{\fontsize{7}{10}\selectfont
\begin{tabular}{cccc|cc|cc|cc|cc|cc|cc|cc|c}
\cline{5-14}
 & & & \multicolumn{1}{c|}{} & \multicolumn{2}{c|}{MA$^a$} & \multicolumn{2}{c|}{IDA$^*$-R$^b$}   & \multicolumn{2}{c|}{BL$^c$} & \multicolumn{2}{c|}{BBT$^c$} & \multicolumn{2}{c|}{BBS$^c$} & \multicolumn{1}{c}{} \\ \hline
\multicolumn{1}{|c|}{$w$} & \multicolumn{1}{c|}{$h$} & \multicolumn{1}{c|}{$n$} & \multicolumn{1}{c|}{{\scriptsize \#}{\sc i}}  & \multicolumn{1}{c|}{Resh} & \multicolumn{1}{c|}{Time} & \multicolumn{1}{c|}{Resh} & \multicolumn{1}{c|}{Time} & \multicolumn{1}{c|}{Resh} & \multicolumn{1}{c|}{Time} & \multicolumn{1}{c|}{Resh} & \multicolumn{1}{c|}{Time} & \multicolumn{1}{c|}{Resh} & \multicolumn{1}{c|}{Time} & \multicolumn{1}{c|}{Opt}   \\ \hline
\multicolumn{1}{|c|}{3} & \multicolumn{1}{c|}{5} & \multicolumn{1}{c|}{9} & \multicolumn{1}{c|}{40} & - & - & \textbf{5.000}$^\blacktriangle$ & - & 5.025 & 0.000 & \textbf{5.000}$^\blacktriangle$ & 0.003 & \textbf{5.000}$^\blacktriangle$ & 0.002 & \multicolumn{1}{c|}{5.000} \\ \cline{1-4} 
\multicolumn{1}{|c|}{3} & \multicolumn{1}{c|}{6} & \multicolumn{1}{c|}{12} & \multicolumn{1}{c|}{40} & - & -  & \textbf{6.175}$^\blacktriangle$ & - & 6.225 & 0.000 & \textbf{6.175}$^\blacktriangle$ & 0.002 & \textbf{6.175}$^\blacktriangle$ & 0.002 & \multicolumn{1}{c|}{6.175} \\ \cline{1-4} 
\multicolumn{1}{|c|}{3} & \multicolumn{1}{c|}{7} & \multicolumn{1}{c|}{15} & \multicolumn{1}{c|}{40} & - & -  & \textbf{7.025}$^\blacktriangle$ & - & \textbf{7.025}$^\blacktriangle$ & 0.000 & \textbf{7.025}$^\blacktriangle$ & 0.002 & \textbf{7.025}$^\blacktriangle$ & 0.003 & \multicolumn{1}{c|}{7.025} \\ \cline{1-4} 
\multicolumn{1}{|c|}{3} & \multicolumn{1}{c|}{8} & \multicolumn{1}{c|}{18} & \multicolumn{1}{c|}{40} & - & -  & \textbf{8.400}$^\blacktriangle$ & - & 8.425 & 0.000 & \textbf{8.400}$^\blacktriangle$ & 0.002 & \textbf{8.400}$^\blacktriangle$ & 0.004 & \multicolumn{1}{c|}{8.400} \\ \cline{1-4} 
\multicolumn{1}{|c|}{3} & \multicolumn{1}{c|}{9} & \multicolumn{1}{c|}{21} & \multicolumn{1}{c|}{40} & - & -  & \textbf{9.275}$^\blacktriangle$ & - & 9.325 & 0.000 & \textbf{9.275}$^\blacktriangle$ & 0.002 & \textbf{9.275}$^\blacktriangle$ & 0.004 & \multicolumn{1}{c|}{9.275} \\ \cline{1-4} 
\multicolumn{1}{|c|}{3} & \multicolumn{1}{c|}{10} & \multicolumn{1}{c|}{24} & \multicolumn{1}{c|}{40} & - & -  & \textbf{10.650}$^\blacktriangle$ & - & 10.700 & 0.000 & \textbf{10.650}$^\blacktriangle$ & 0.002 & \textbf{10.650}$^\blacktriangle$ & 0.005 & \multicolumn{1}{c|}{10.650} \\ \cline{1-4} 
\multicolumn{1}{|c|}{4} & \multicolumn{1}{c|}{6} & \multicolumn{1}{c|}{16} & \multicolumn{1}{c|}{40} & - & -  & \textbf{10.200}$^\blacktriangle$ & - & 10.425 & 0.000 & \textbf{10.200}$^\blacktriangle$ & 0.002 & \textbf{10.200}$^\blacktriangle$ & 0.003 & \multicolumn{1}{c|}{10.200} \\ \cline{1-4} 
\multicolumn{1}{|c|}{4} & \multicolumn{1}{c|}{7} & \multicolumn{1}{c|}{20} & \multicolumn{1}{c|}{40} & - & -  & \textbf{12.950}$^\blacktriangle$ & - & 13.175 & 0.000 & \textbf{12.950}$^\blacktriangle$ & 0.002 & \textbf{12.950}$^\blacktriangle$ & 0.010 & \multicolumn{1}{c|}{12.950} \\ \cline{1-4} 
\multicolumn{1}{|c|}{4} & \multicolumn{1}{c|}{8} & \multicolumn{1}{c|}{24} & \multicolumn{1}{c|}{40} & - & -  & \textbf{14.025}$^\blacktriangle$ & - & 14.300 & 0.000 & \textbf{14.025}$^\blacktriangle$ & 0.003 & \textbf{14.025}$^\blacktriangle$ & 0.033 & \multicolumn{1}{c|}{14.025} \\ \cline{1-4} 
\multicolumn{1}{|c|}{4} & \multicolumn{1}{c|}{9} & \multicolumn{1}{c|}{28} & \multicolumn{1}{c|}{40} & - & -  & \textbf{16.125}$^\blacktriangle$ & - & 16.525 & 0.000 & \textbf{16.125}$^\blacktriangle$ & 0.003 & \textbf{16.125}$^\blacktriangle$ & 0.046 & \multicolumn{1}{c|}{16.125} \\ \cline{1-4} 
\multicolumn{1}{|c|}{5} & \multicolumn{1}{c|}{6} & \multicolumn{1}{c|}{20} & \multicolumn{1}{c|}{40} & - & -  & \textbf{15.425}$^\blacktriangle$ & - &16.275 & 0.000 & \textbf{15.425}$^\blacktriangle$ & 0.003 & \textbf{15.425}$^\blacktriangle$ & 0.010 & \multicolumn{1}{c|}{15.425} \\ \cline{1-4} 
\multicolumn{1}{|c|}{5} & \multicolumn{1}{c|}{7} & \multicolumn{1}{c|}{25} & \multicolumn{1}{c|}{40} & - & -  & \textbf{18.850}$^\blacktriangle$ & - &19.850 & 0.000 & \textbf{18.850}$^\blacktriangle$ & 0.005 & \textbf{18.850}$^\blacktriangle$ & 0.065 & \multicolumn{1}{c|}{18.850} \\ \cline{1-4} 
\multicolumn{1}{|c|}{5} & \multicolumn{1}{c|}{8} & \multicolumn{1}{c|}{30} & \multicolumn{1}{c|}{40} & - & -  & \textbf{22.075}$^\blacktriangle$ & - &23.525 & 0.000 & \textbf{22.075}$^\blacktriangle$ & 0.041 & \textbf{22.075}$^\blacktriangle$ & 0.172 & \multicolumn{1}{c|}{22.075} \\ \cline{1-4} 
\multicolumn{1}{|c|}{5} & \multicolumn{1}{c|}{9} & \multicolumn{1}{c|}{35} & \multicolumn{1}{c|}{40} & - & -  & 24.300 & - & 25.375 & 0.000 & 24.400 & 0.187 & \textbf{24.250}$^\blacktriangle$ & 0.318 & \multicolumn{1}{c|}{24.250} \\ \cline{1-4} 
\multicolumn{1}{|c|}{5} & \multicolumn{1}{c|}{10} & \multicolumn{1}{c|}{40} & \multicolumn{1}{c|}{40} & - & - & 27.850 & - & 28.650 & 0.000 & 28.000 & 0.316 & \textbf{27.750} & 0.371 & \multicolumn{1}{c|}{27.700} \\ \cline{1-4} 
\multicolumn{1}{|c|}{5} & \multicolumn{1}{c|}{11} & \multicolumn{1}{c|}{45} & \multicolumn{1}{c|}{40} & - & - & 30.675 & - & 31.350 & 0.001 & 30.900 & 0.513 & \textbf{30.525} & 0.599 & \multicolumn{1}{c|}{30.450} \\ \cline{1-4} 
\multicolumn{1}{|c|}{5} & \multicolumn{1}{c|}{12} & \multicolumn{1}{c|}{50} & \multicolumn{1}{c|}{40} & - & - & 33.625 & - & 34.300 & 0.001 & 34.350 & 0.642 & \textbf{33.325} & 0.667 & \multicolumn{1}{c|}{$\ast$} \\ \cline{1-4} 
\multicolumn{1}{|c|}{6} & \multicolumn{1}{c|}{8} & \multicolumn{1}{c|}{36} & \multicolumn{1}{c|}{40} & 31.8 & 0.38 & 31.075 & - & 34.050 & 0.001 & 31.050 & 0.447 & \textbf{30.925} & 0.566 & \multicolumn{1}{c|}{$\ast$} \\ \cline{1-4} 
\multicolumn{1}{|c|}{6} & \multicolumn{1}{c|}{12} & \multicolumn{1}{c|}{60} & \multicolumn{1}{c|}{40} & 47.6 & 0.65 & 47.200 & - & 48.650 & 0.003 & 48.100 & 1.211 & \textbf{46.000} & 0.836 & \multicolumn{1}{c|}{$\ast$} \\ \cline{1-4} 
\multicolumn{1}{|c|}{10} & \multicolumn{1}{c|}{8} & \multicolumn{1}{c|}{60} & \multicolumn{1}{c|}{40} & 82.9 & 0.93 & 84.975 & - & 91.725 & 0.000 & 85.700 & 1.314 & \textbf{77.300} & 0.788 & \multicolumn{1}{c|}{$\ast$} \\ \cline{1-4} 
\multicolumn{1}{|c|}{10} & \multicolumn{1}{c|}{12} & \multicolumn{1}{c|}{100} & \multicolumn{1}{c|}{40} & 121.3 & 1.57 & 126.325 & - & 128.350 & 0.000 & 132.025 & 1.927 & \textbf{112.475} & 0.917 & \multicolumn{1}{c|}{$\ast$} \\ \cline{1-4} 
\multicolumn{1}{|c|}{100} & \multicolumn{1}{c|}{102} & \multicolumn{1}{c|}{10000} & \multicolumn{1}{c|}{40} & - & - & - & - & 63358.246 & 0.190 & $\ast$ & $\ast$ & \textbf{62413.277} & 1.111 & \multicolumn{1}{c|}{$\ast$} \\ \hline 
\end{tabular}
}
\end{center}
{\scriptsize $^a$Pentium IV and 512 MB RAM}\\
{\scriptsize $^b$Intel Core i7-920 2.66GHz and 12 GB RAM}\\
{\scriptsize $^c$Intel Core i7-3632QM 2.2GHz and 1 GB RAM}
\caption{Computational results on the set {\sc group 4}}
\label{tab:g4}
\end{table}

\begin{figure}[htbp] 
  \begin{center}
\includegraphics[width=7cm,height=6cm]{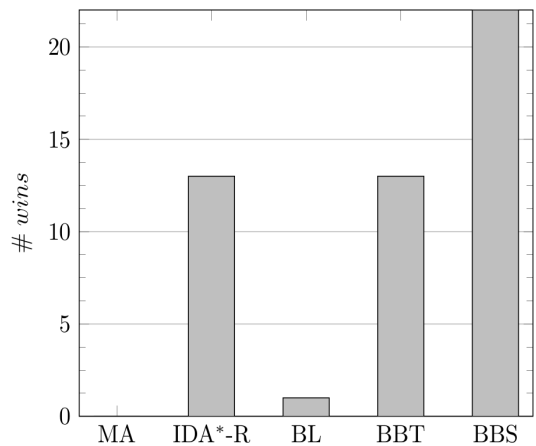}
    \end{center}
\caption{Aggregate results on the set {\sc group 4}}
\label{fig:g4}
\end{figure}

In Table \ref{tab:g5} and Figure \ref{fig:g5}, the computational results obtained on the instances of {\sc group 5} are shown.
Column {\sc type} denotes if the instances are balanced (b) or unbalanced (u) and column d reports the storage density.
For the sake of conciseness, in the table we report only the rows where at least
one between BBT and the Bounded Beam Search methods does not find the optimal
solution.
However, the results on all the instances are considered in the figure.
According to these results, the two procedures present similar behaviors and
both outperform BL.

\begin{table}
\begin{center}
{\scriptsize
\begin{tabular}{cccccc|cc|cc|cc|c}
\cline{7-12}
\multicolumn{1}{l}{}       & \multicolumn{1}{l}{}  & \multicolumn{1}{l}{}   & \multicolumn{1}{l}{}   & \multicolumn{1}{l}{}       & \multicolumn{1}{l|}{}    & \multicolumn{2}{c|}{BL} & \multicolumn{2}{c|}{BBT} & \multicolumn{2}{c|}{BBS} & \multicolumn{1}{c}{} \\ \hline
\multicolumn{1}{|c|}{{\sc type}} & \multicolumn{1}{c|}{$w$} & \multicolumn{1}{c|}{$h$} & \multicolumn{1}{c|}{$n$} & \multicolumn{1}{c|}{delta} & \multicolumn{1}{c|}{{\scriptsize \#}{\sc i}}  & \multicolumn{1}{c|}{Resh} & \multicolumn{1}{c|}{Time} & \multicolumn{1}{c|}{Resh} & \multicolumn{1}{c|}{Time} & \multicolumn{1}{c|}{Resh} & \multicolumn{1}{c|}{Time} & \multicolumn{1}{c|}{Opt}   \\ \hline
\multicolumn{1}{|c|}{b} & \multicolumn{1}{c|}{6} & \multicolumn{1}{c|}{7} & \multicolumn{1}{c|}{29} & \multicolumn{1}{c|}{70\%} & \multicolumn{1}{c|}{40} & 17.580 & 0.00 & \textbf{17.180}$^\blacktriangle$ & 0.01 & 17.200 & 0.12 & \multicolumn{1}{c|}{17.180} \\ \cline{1-6} 
\multicolumn{1}{|c|}{b} & \multicolumn{1}{c|}{7} & \multicolumn{1}{c|}{6} & \multicolumn{1}{c|}{31} & \multicolumn{1}{c|}{75\%} & \multicolumn{1}{c|}{40} & 24.550 & 0.00 & \textbf{23.150}$^\blacktriangle$ & 0.02 & 23.180 & 0.18 & \multicolumn{1}{c|}{23.150} \\ \cline{1-6} 
\multicolumn{1}{|c|}{b} & \multicolumn{1}{c|}{7} & \multicolumn{1}{c|}{7} & \multicolumn{1}{c|}{31} & \multicolumn{1}{c|}{65\%} & \multicolumn{1}{c|}{40} & 20.030 & 0.00 & 19.480 & 0.04 & \textbf{19.430}$^\blacktriangle$ & 0.13 & \multicolumn{1}{c|}{19.430} \\ \cline{1-6} 
\multicolumn{1}{|c|}{b} & \multicolumn{1}{c|}{7} & \multicolumn{1}{c|}{7} & \multicolumn{1}{c|}{36} & \multicolumn{1}{c|}{75\%} & \multicolumn{1}{c|}{40} & 27.280 & 0.00 & \textbf{25.900} & 0.12 & \textbf{25.900} & 0.42 & \multicolumn{1}{c|}{25.850} \\ \cline{1-6} 
\multicolumn{1}{|c|}{u} & \multicolumn{1}{c|}{6} & \multicolumn{1}{c|}{7} & \multicolumn{1}{c|}{31} & \multicolumn{1}{c|}{75\%} & \multicolumn{1}{c|}{40} & 20.350 & 0.00 & \textbf{19.700}$^\blacktriangle$ & 0.02 & 19.730 & 0.06 & \multicolumn{1}{c|}{19.700} \\ \cline{1-6} 
\multicolumn{1}{|c|}{u} & \multicolumn{1}{c|}{7} & \multicolumn{1}{c|}{5} & \multicolumn{1}{c|}{21} & \multicolumn{1}{c|}{60\%} & \multicolumn{1}{c|}{40} & 15.300 & 0.00 & \textbf{14.730}$^\blacktriangle$ & 0.00 & 14.750 & 0.02 & \multicolumn{1}{c|}{14.730} \\ \cline{1-6} 
\multicolumn{1}{|c|}{u} & \multicolumn{1}{c|}{7} & \multicolumn{1}{c|}{7} & \multicolumn{1}{c|}{36} & \multicolumn{1}{c|}{75\%} & \multicolumn{1}{c|}{40} & 27.700 & 0.00 & 26.350 & 0.16 & \textbf{26.330} & 0.38 & \multicolumn{1}{c|}{26.300} \\ \hline 
\end{tabular}
}
\end{center}
\caption{Computational results on the set {\sc group 5}}
\label{tab:g5}
\end{table}

\begin{figure}[htbp] 
  \begin{center}

\includegraphics[width=6.5cm,height=5cm]{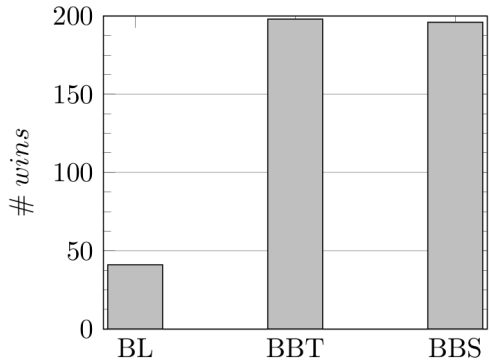}
    \end{center}
\caption{Aggregate results on the set {\sc group 5}}
\label{fig:g5}
\end{figure}

We present the computational results on the instances of {\sc group 6} in Table \ref{tab:g6} and
Figure \ref{fig:g6}.
In the table, according to the results in the literature, we report total reshuffles and times
instead of the average values.
Also in this case, the Bounded Beam Search heuristic outperforms all the other
approaches.

\begin{table}
\begin{center}
{\scriptsize
\begin{tabular}{ccc|cc|cc|cc|cc|cc|c}
\cline{4-11}
 & & \multicolumn{1}{c|}{} & \multicolumn{2}{c|}{IDA$^*$-R$^a$}  & \multicolumn{2}{c|}{BL$^b$} & \multicolumn{2}{c|}{BBT$^b$} & \multicolumn{2}{c|}{BBS$^b$} & \multicolumn{1}{c}{} \\ \hline
\multicolumn{1}{|c|}{$w$} & \multicolumn{1}{c|}{$h$} & \multicolumn{1}{c|}{{\scriptsize \#}{\sc i}}  & \multicolumn{1}{c|}{Resh} & \multicolumn{1}{c|}{Time} & \multicolumn{1}{c|}{Resh} & \multicolumn{1}{c|}{Time} & \multicolumn{1}{c|}{Resh} & \multicolumn{1}{c|}{Time} & \multicolumn{1}{c|}{Resh} & \multicolumn{1}{c|}{Time} & \multicolumn{1}{c|}{Opt}   \\ \hline
\multicolumn{1}{|c|}{6-10} & \multicolumn{1}{c|}{3} & \multicolumn{1}{c|}{1500} & \textbf{13592}$^\blacktriangle$ & 0.00  & 13603 & 0.00 & \textbf{13592}$^\blacktriangle$ & 2.94 & \textbf{13592}$^\blacktriangle$ & 6.01 & \multicolumn{1}{c|}{13592} \\ \cline{1-3} 
\multicolumn{1}{|c|}{6-10} & \multicolumn{1}{c|}{4} & \multicolumn{1}{c|}{2000} & \textbf{33095}$^\blacktriangle$ & 0.12 & 33535 & 0.10 & \textbf{33095}$^\blacktriangle$ & 6.10 & \textbf{33095}$^\blacktriangle$ & 13.30 & \multicolumn{1}{c|}{33095} \\ \cline{1-3} 
\multicolumn{1}{|c|}{6} & \multicolumn{1}{c|}{5} & \multicolumn{1}{c|}{500} & \textbf{9503}$^\blacktriangle$ & 0.05  & 9763 & 0.06 & \textbf{9503}$^\blacktriangle$ & 1.37 & \textbf{9503}$^\blacktriangle$ & 4.92 & \multicolumn{1}{c|}{9503} \\ \cline{1-3} 
\multicolumn{1}{|c|}{7} & \multicolumn{1}{c|}{5} & \multicolumn{1}{c|}{500}  & \textbf{11259}$^\blacktriangle$ & 0.16  & 11634 & 0.06 & \textbf{11259}$^\blacktriangle$ & 1.78 & \textbf{11259}$^\blacktriangle$ & 13.24 & \multicolumn{1}{c|}{11259} \\ \cline{1-3} 
\multicolumn{1}{|c|}{8} & \multicolumn{1}{c|}{5} & \multicolumn{1}{c|}{500}  & 12864 & 1.24  & 13232 & 0.08 & 12864 & 4.80 & \textbf{12863}$^\blacktriangle$ & 26.68 & \multicolumn{1}{c|}{12863} \\ \cline{1-3} 
\multicolumn{1}{|c|}{9} & \multicolumn{1}{c|}{5} & \multicolumn{1}{c|}{500} & \textbf{14158} & 2.74  & 14608 & 0.11 & \textbf{14158} & 10.81 & \textbf{14158} & 31.44 & \multicolumn{1}{c|}{14156} \\ \cline{1-3} 
\multicolumn{1}{|c|}{10} & \multicolumn{1}{c|}{5} & \multicolumn{1}{c|}{500} & \textbf{15727} & 6.91  & 16214 & 0.14 & 15728 & 21.69 & 15731 & 51.42 & \multicolumn{1}{c|}{15726} \\ \cline{1-3} 
\multicolumn{1}{|c|}{6} & \multicolumn{1}{c|}{6} & \multicolumn{1}{c|}{600} & \textbf{16175}$^\blacktriangle$ & 4.56  & 17084 & 0.11 & \textbf{16175}$^\blacktriangle$ & 5.06 & 16176 & 38.20 & \multicolumn{1}{c|}{16175} \\ \cline{1-3} 
\multicolumn{1}{|c|}{7} & \multicolumn{1}{c|}{6} & \multicolumn{1}{c|}{600} & 18607 & 10.72  & 19615 & 0.17 & 18606 & 21.44 & \textbf{18603} & 72.12 & \multicolumn{1}{c|}{18601} \\ \cline{1-3} 
\multicolumn{1}{|c|}{8} & \multicolumn{1}{c|}{6} & \multicolumn{1}{c|}{600}  & 21221 & 18.90 & 22385 & 0.25 & 21211 & 61.66 & \textbf{21195} & 96.70 & \multicolumn{1}{c|}{21187} \\ \cline{1-3} 
\multicolumn{1}{|c|}{9} & \multicolumn{1}{c|}{6} & \multicolumn{1}{c|}{600}  & 23790 & 36.72 & 25121 & 0.35 & 23793 & 158.69 & \textbf{23726} & 167.90 & \multicolumn{1}{c|}{$\ast$} \\ \cline{1-3} 
\multicolumn{1}{|c|}{10} & \multicolumn{1}{c|}{6} & \multicolumn{1}{c|}{600} & 26125 & 46.83 & 27518 & 0.47 & 26149 & 238.96 & \textbf{26025} & 238.13 & \multicolumn{1}{c|}{$\ast$} \\ \cline{1-3} 
\multicolumn{1}{|c|}{6} & \multicolumn{1}{c|}{7} & \multicolumn{1}{c|}{700}  & 24860 & 41.07  & 26694 & 0.25 & 24831 & 66.55 & \textbf{24827} & 136.22 & \multicolumn{1}{c|}{24813} \\ \cline{1-3} 
\multicolumn{1}{|c|}{7} & \multicolumn{1}{c|}{7} & \multicolumn{1}{c|}{700} & 28930 & 78.73  & 31124 & 0.41 & 28884 & 192.74 & \textbf{28835} & 192.75 & \multicolumn{1}{c|}{28771} \\ \cline{1-3} 
\multicolumn{1}{|c|}{8} & \multicolumn{1}{c|}{7} & \multicolumn{1}{c|}{700}  & 32782 & 102.69  & 35153 & 0.64 & 32701 & 404.96 & \textbf{32454} & 301.74 & \multicolumn{1}{c|}{$\ast$} \\ \cline{1-3} 
\multicolumn{1}{|c|}{9} & \multicolumn{1}{c|}{7} & \multicolumn{1}{c|}{700} & 37014 & 116.18  & 39483 & 0.90 & 37064 & 604.61 & \textbf{36494} & 349.37 & \multicolumn{1}{c|}{$\ast$} \\ \cline{1-3} 
\multicolumn{1}{|c|}{10} & \multicolumn{1}{c|}{7} & \multicolumn{1}{c|}{700} & 40896 & 129.75  & 43429 & 1.27 & 41053 & 732.33 & \textbf{40132} & 307.79 & \multicolumn{1}{c|}{$\ast$} \\ \hline 
\end{tabular}
}
\end{center}
{\scriptsize $^a$Pentium-IV 3.0 GHz and 1.0 GB RAM}\\
{\scriptsize $^b$ CPU Intel Core i7-3632QM 2.20GHz and 1.0 GB RAM}
\caption{Computational results on the set {\sc group 6}}
\label{tab:g6}
\end{table}

\begin{figure}[htbp] 
  \begin{center}
\includegraphics[width=7cm,height=6cm]{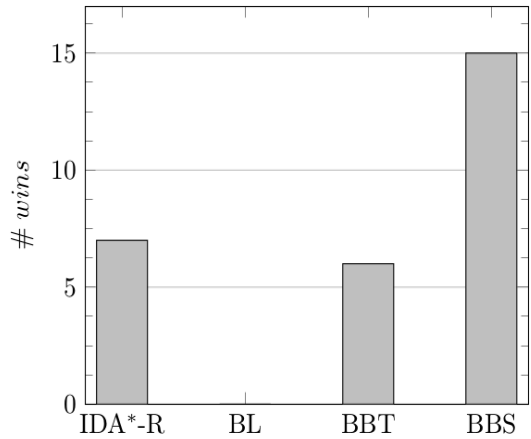}

    \end{center}
\caption{Aggregated results on the set {\sc group 6}}
\label{fig:g6}
\end{figure}

Finally, in Table \ref{tab:lbri} and Figure \ref{fig:lbri},
we report the results on the instances of {\sc lbri}.
Here, BBT can not find feasible solutions within the time limit on all the subsets of instances.
The Bounded Beam Search heuristic is again the best approach. 

Reviewing the whole computational experience, the Bounded Beam Search procedure
presents the best results on almost all instances
in the dataset. In particular, in the sets for which all the instances are solved to optimality, the Bounded Beam Search heuristic finds
such an optimum in more than 70\% of the cases. 

We think that such good performances are related to the combined use of upper and lower bound in the
selection of the most promising $\beta$ descendants of each node, in particular for breaking ties in the
evaluation.
Table~\ref{resultsheuristic}
emphasizes the dependency of the overall algorithm from the choice of an appropriate upper bound.

The second best algorithm is the one of ~\citet{TT2016}, when used as a heuristic by fixing a time limit of one second (BBT).
Nevertheless, it becomes impractical for large instances although the exact variant is extremely efficient for the
small ones.

\begin{table}
\begin{center}
{\scriptsize
\begin{tabular}{cccc|cc|cc|cc|}
\cline{5-10}
 & & & & \multicolumn{2}{c|}{BL$^a$} & \multicolumn{2}{c|}{BBT$^a$} & \multicolumn{2}{c|}{BBS$^a$} \\ \hline
\multicolumn{1}{|c|}{$w$} & \multicolumn{1}{c|}{$h$} & \multicolumn{1}{c|}{$n$} & \multicolumn{1}{c|}{{\scriptsize \#}{\sc i}}  & \multicolumn{1}{c|}{Resh} & \multicolumn{1}{c|}{Time} & \multicolumn{1}{c|}{Resh} & \multicolumn{1}{c|}{Time} & \multicolumn{1}{c|}{Resh} & \multicolumn{1}{c|}{Time}    \\ \hline
\multicolumn{1}{|c|}{50} & \multicolumn{1}{c|}{4}  & \multicolumn{1}{c|}{[196,199]} & \multicolumn{1}{c|}{400} & 99.755 & 0.004 & $\ast$ & $\ast$ & \textbf{98.803} & 0.111 \\ \cline{1-4} 
\multicolumn{1}{|c|}{50} & \multicolumn{1}{c|}{7}  & \multicolumn{1}{c|}{[343,349]} & \multicolumn{1}{c|}{700} & 271.676 & 0.000 & $\ast$ & $\ast$ & \textbf{261.684} & 1.003 \\ \cline{1-4} 
\multicolumn{1}{|c|}{50} & \multicolumn{1}{c|}{10}  & \multicolumn{1}{c|}{[490,499]} & \multicolumn{1}{c|}{1000} & 515.823 & 0.001 & $\ast$ & $\ast$ & \textbf{499.245} & 1.004 \\ \cline{1-4} 
\multicolumn{1}{|c|}{100} & \multicolumn{1}{c|}{4} & \multicolumn{1}{c|}{[396,399]} & \multicolumn{1}{c|}{400} & 196.313 & 0.024 & $\ast$ & $\ast$ & \textbf{195.145} & 0.379 \\ \cline{1-4} 
\multicolumn{1}{|c|}{100} & \multicolumn{1}{c|}{7} & \multicolumn{1}{c|}{[693,699]} & \multicolumn{1}{c|}{700} & 512.937 & 0.001 & $\ast$ & $\ast$ & \textbf{505.373} & 1.004 \\ \cline{1-4} 
\multicolumn{1}{|c|}{100} & \multicolumn{1}{c|}{10} & \multicolumn{1}{c|}{[990,999]} & \multicolumn{1}{c|}{1000} & 964.140 & 0.002 & $\ast$ & $\ast$ & \textbf{952.621} & 1.005 \\ \cline{1-4} 
\multicolumn{1}{|c|}{500} & \multicolumn{1}{c|}{4} & \multicolumn{1}{c|}{[1996,1999]} & \multicolumn{1}{c|}{400} & 964.168 & 0.116 & $\ast$ & $\ast$ & \textbf{963.383} & 0.843 \\ \cline{1-4} 
\multicolumn{1}{|c|}{500} & \multicolumn{1}{c|}{7} & \multicolumn{1}{c|}{[3493,3499]} & \multicolumn{1}{c|}{700} & 2338.576 & 0.008 & $\ast$ & $\ast$ & \textbf{2331.293} & 1.013 \\ \cline{1-4} 
\multicolumn{1}{|c|}{500} & \multicolumn{1}{c|}{10} & \multicolumn{1}{c|}{[4990,4999]} & \multicolumn{1}{c|}{1000} & 4277.015 & 0.014 & $\ast$ & $\ast$ & \textbf{4248.174} & 1.016 \\ \cline{1-4} 
\multicolumn{1}{|c|}{1000} & \multicolumn{1}{c|}{4} & \multicolumn{1}{c|}{[3996,3999]} & \multicolumn{1}{c|}{400} & 1922.303 & 0.014 & $\ast$ & $\ast$ & \textbf{1921.778} & 0.919 \\ \cline{1-4} 
\multicolumn{1}{|c|}{1000} & \multicolumn{1}{c|}{7} & \multicolumn{1}{c|}{[6993,6999]} & \multicolumn{1}{c|}{700} & 4561.887 & 0.029 & $\ast$ & $\ast$ & \textbf{4556.585} & 1.030 \\ \cline{1-4} 
\multicolumn{1}{|c|}{1000} & \multicolumn{1}{c|}{10} & \multicolumn{1}{c|}{[9990,9999]} & \multicolumn{1}{c|}{1000} & 8135.950 & 0.050 & $\ast$ & $\ast$ & \textbf{8109.102} & 1.037 \\ \hline
\end{tabular}
}
\end{center}
{\scriptsize $^a$ CPU Intel Core i7-3632QM 2.20GHz and 1.0 GB RAM}
\caption{Computational results on the set {\sc lbri}}
\label{tab:lbri}
\end{table}

\begin{figure}[htbp] 
  \begin{center}
\includegraphics[width=5.5cm,height=4cm]{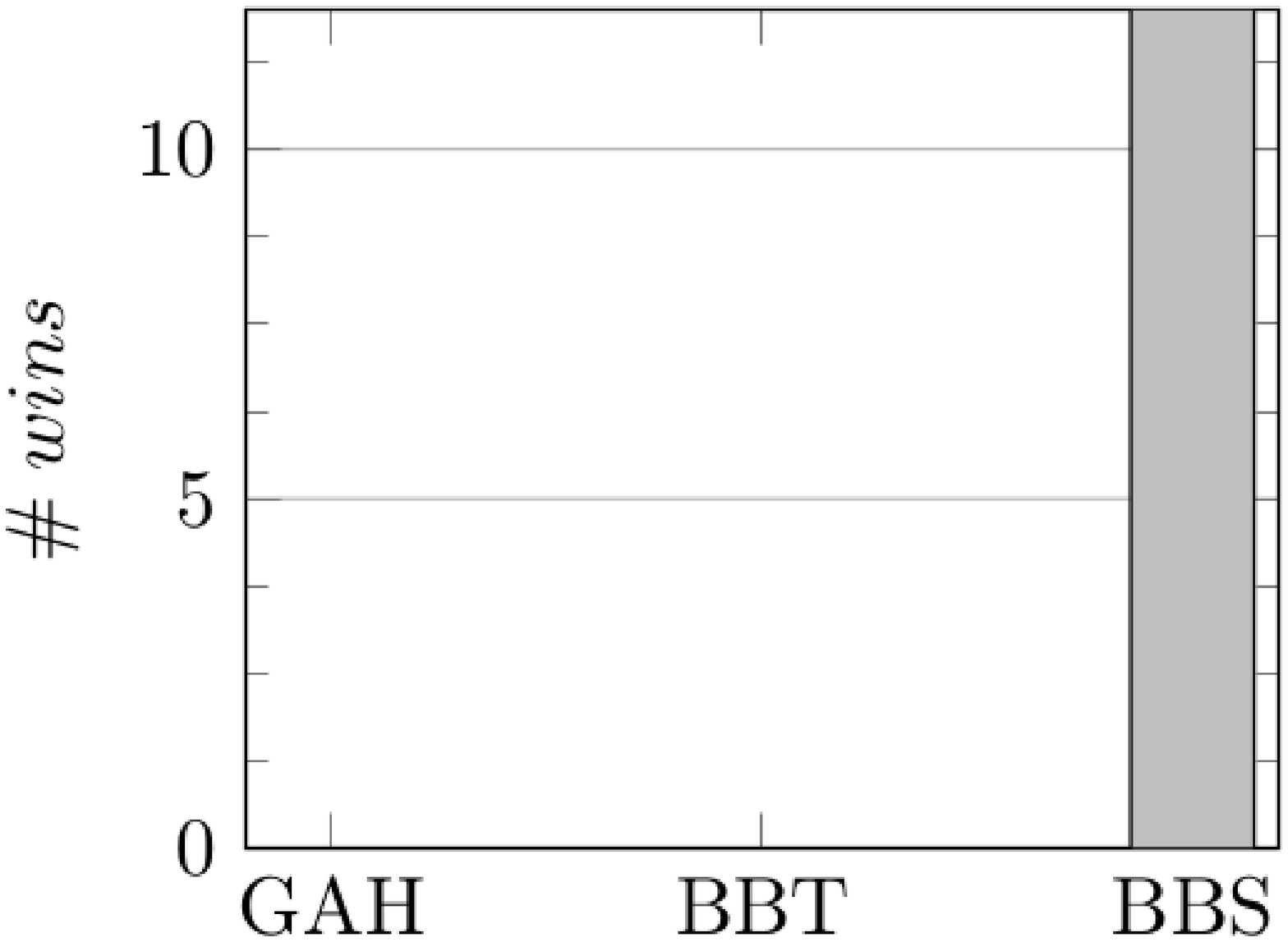}

    \end{center}
\caption{Aggregated results on the set {\sc lbri}}
\label{fig:lbri}
\end{figure}

\section{Conclusions}
\label{conclusions}

In this paper we considered the Block Relocation Problem, which has relevant practical applications in the container
logistics.
The problem is difficult both in theory and in practice and even the most effective exact approaches fail
to solve instances with size far away from the real ones.
In this paper we introduced a new lower bound and a new beam search heuristic for the problem.
The lower bound dominates most of the existing ones and it can be calculated by a polynomial time algorithm.
The beam search heuristic includes such procedure in order to reduce the search space and select the most promising nodes.
The results show that the proposed heuristic is very effective and outperforms all the other approaches on most of the considered
instances. 
The algorithm is also tested on a new set of real-size instances, {\sc lbri}, introduced here for the first time.

\section*{Acknowledgements}

The authors have been partially supported by Ministry of Instruction University
and Research (MIUR) with the program PRIN 2015,
project ``SPORT - Smart PORt Terminals'', code 2015XAPRKF;
project ``Nonlinear and Combinatorial Aspects of Complex Networks'',
code 2015B5F27W; project ``Scheduling cuts: new optimization models and
algorithms for cutting, packing and nesting in manufacturing processes'',
code 20153TXRX9; the CONtainer TRAnshipment STation (CONTRAST) 2013 project,
code FILAS-CR-2011-1315.

\newpage


\begin{thebibliography}{39}
\providecommand{\natexlab}[1]{#1}
\providecommand{\url}[1]{\texttt{#1}}
\expandafter\ifx\csname urlstyle\endcsname\relax
  \providecommand{\doi}[1]{doi: #1}\else
  \providecommand{\doi}{doi: \begingroup \urlstyle{rm}\Url}\fi

\bibitem[Bacci(2017)]{B17}
T.~Bacci.
\newblock Bacci website, 2017.
\newblock URL \url{http://www.iasi.cnr.it/~tbacci/}.

\bibitem[Bacci et~al.(2017)Bacci, Mattia, and Ventura]{BMV2017}
T.~Bacci, S.~Mattia, and P.~Ventura.
\newblock Some complexity results for the minimum blocking items problem.
\newblock In \emph{Optimization and Decision Science (ODS 2017): Methodologies
  and Applications}, volume 217 of \emph{Springer Proceedings in Mathematics
  and Statistics}, pages 475--483, 2017.

\bibitem[Bacci et~al.(2018)Bacci, Mattia, and Ventura]{BMV2018}
T.~Bacci, S.~Mattia, and P.~Ventura.
\newblock A new lower bound for the {B}lock {R}elocation {P}roblem.
\newblock In \emph{{I}nternational {C}onference on {C}omputation and
  {L}ogistics -- ICCL 2018}, Lecture Notes In Computer Science, 11184, pp. 168–174, 2018.

\bibitem[Bonomo et~al.(2011)Bonomo, Mattia, and Oriolo]{BMO2011}
F.~Bonomo, S.~Mattia, and G.~Oriolo.
\newblock Bounded coloring of co-comparability graphs and the pickup and
  delivery tour combination problem.
\newblock \emph{Theoretical Computer Science}, 412\penalty0 (45):\penalty0 6261
  -- 6268, 2011.

\bibitem[Carlo et~al.(2014)Carlo, Vis, and Roodbergen]{CVR2014}
H.J. Carlo, I.F.A. Vis, and K.J. Roodbergen.
\newblock Storage yard operations in container terminals: Literature overview,
  trends, and research directions.
\newblock \emph{European Journal of Operational Research}, 235\penalty0
  (2):\penalty0 412 -- 430, 2014.

\bibitem[Caserta et~al.(2009)Caserta, Schwarze, and Vo\ss]{CSV2009}
M.~Caserta, S.~Schwarze, and S.~Vo\ss.
\newblock A new binary description of the blocks relocation problem and
  benefits in a look ahead heuristic.
\newblock \emph{Proceedings of the EvoCOP 2009}, pages 37--48, 2009.

\bibitem[Caserta et~al.(2011)Caserta, Vo\ss, and Sniedovich]{CVS2011}
M.~Caserta, S.~Vo\ss, and M.~Sniedovich.
\newblock Applying the corridor method to a blocks relocation problem.
\newblock \emph{OR Spectrum}, 33\penalty0 (4):\penalty0 915--929, 2011.

\bibitem[Caserta et~al.(2012)Caserta, Schwarze, and Vo\ss]{CSV2012}
M.~Caserta, S.~Schwarze, and S.~Vo\ss.
\newblock A mathematical formulation and complexity considerations for the
  blocks relocation problem.
\newblock \emph{European Journal of Operational Research}, 219:\penalty0
  96--104, 2012.

\bibitem[Forster and Bortfeldt(2012)]{FB2012}
F.~Forster and A.~Bortfeldt.
\newblock A tree search procedure for the container relocation problem.
\newblock \emph{Computers \& Operations Research}, 39:\penalty0 299--309, 2012.

\bibitem[G\"unter and Kim(2005)]{GK2005}
H.~O. G\"unter and K.~H. Kim.
\newblock \emph{Container Terminals and Automated Transport Systems}.
\newblock Springer Berlin Heidelberg, 2005.

\bibitem[Hottung and Tierney(2016)]{HT2016}
A.~Hottung and K.~Tierney.
\newblock A biased random-key genetic algorithm for the container
  pre-marshalling problem.
\newblock \emph{Computers \& Operations Research}, 75:\penalty0 83 -- 102,
  2016.

\bibitem[Izquierdo et~al.(2014)Izquierdo, Batista, and Vega]{IBV2014}
C.~E. Izquierdo, B.~M. Batista, and J.~M.~M. Vega.
\newblock A domain-specific knowledge-based heuristic for the blocks relocation
  problem.
\newblock \emph{Advanced Engineering Informatics 28}, 28:\penalty0 327--343,
  2014.

\bibitem[Izquierdo et~al.(2015)Izquierdo, Batista, and Vega]{IBV2015}
C.~E. Izquierdo, B.~M. Batista, and J.~M.~M. Vega.
\newblock An exact approach for the blocks relocation problem.
\newblock \emph{Expert Systems with Applications}, 42:\penalty0 6408--6422,
  2015.

\bibitem[Jansen(2003)]{J2003}
K.~Jansen.
\newblock The mutual exclusion scheduling problem for permutation and
  comparability graphs.
\newblock \emph{Information and Computation}, 180:\penalty0 71--81, 2003.

\bibitem[Jovanovic and Vo\ss(2014)]{JV2014}
R.~Jovanovic and S.~Vo\ss.
\newblock A chain heuristic for the blocks relocation problem.
\newblock \emph{Computers \& Industrial Engineering}, 75:\penalty0 79 -- 86,
  2014.

\bibitem[Jovanovic et~al.(2017)Jovanovic, Tuba, and Vo{\ss}]{JTV2017}
R.~Jovanovic, M.~Tuba, and S.~Vo{\ss}.
\newblock A multi-heuristic approach for solving the pre-marshalling problem.
\newblock \emph{Central European Journal of Operations Research}, 25\penalty0
  (1):\penalty0 1--28, 2017.

\bibitem[Kim and Hong(2006)]{KH2006}
K.~H. Kim and G.~P. Hong.
\newblock A heuristic rule for relocating blocks.
\newblock \emph{Computers \& Operations Research}, 33:\penalty0 940--954, 2006.

\bibitem[Ku and Arthanari(2016{\natexlab{a}})]{KA2016}
D.~Ku and T.~S. Arthanari.
\newblock Container relocation problem with time windows for container
  departure.
\newblock \emph{European Journal of Operational Research}, 252\penalty0
  (3):\penalty0 1031 -- 1039, 2016{\natexlab{a}}.

\bibitem[Ku and Arthanari(2016{\natexlab{b}})]{KA2016b}
D.~Ku and T.S. Arthanari.
\newblock On the abstraction method for the container relocation problem.
\newblock \emph{Computers \& Operations Research}, 68\penalty0 (Supplement
  C):\penalty0 110 -- 122, 2016{\natexlab{b}}.

\bibitem[Lee and Hsu(2007)]{LH2007}
Y.~Lee and N.~Y. Hsu.
\newblock An optimization model for the container pre-marshalling problem.
\newblock \emph{Computers \& Operations Research}, 34\penalty0 (11):\penalty0
  3295 -- 3313, 2007.

\bibitem[Lee and Lee(2010)]{LL2010}
Y.~Lee and Y.~L. Lee.
\newblock A heuristic for retrieving containers from a yard.
\newblock \emph{Computers \& Operations Research}, 47:\penalty0 1139--1147,
  2010.

\bibitem[Lehnfeld and Knust(2014)]{LK2014}
J.~Lehnfeld and S.~Knust.
\newblock Loading, unloading and premarshalling of stacks in storage areas:
  Survey and classification.
\newblock \emph{European Journal of Operational Research}, 239:\penalty0
  297--312, 2014.

\bibitem[Murty et~al.(2005)Murty, Liu, Tseng, Leung, Lai, and Chiu]{MLTLLC2005}
K.~G. Murty, J.~Liu, M.~M. Tseng, E.~Leung, K-K. Lai, and W.~C. Chiu.
\newblock Hong kong international terminals gains elastic capacity using a
  data-intensive decision support system.
\newblock \emph{Interfaces}, 35:\penalty0 61 -- 75, 2005.

\bibitem[Petering and Hussein(2013)]{PH2013}
M.~E.~H. Petering and M.~I. Hussein.
\newblock A new mixed integer program and extended look-ahead heuristic
  algorithm for the block relocation problem.
\newblock \emph{European Journal of Operational Research}, 231:\penalty0
  120–130, 2013.

\bibitem[Tanaka and Mizuno(2015)]{TM2015}
S.~Tanaka and F.~Mizuno.
\newblock Dominance properties for the unrestricted block relocation problem
  and their application to a branch-and-bound algorithm.
\newblock \emph{2015 IEEE International Conference on Automation Science and
  Engineering (CASE)}, pages 509--514, 2015.

\bibitem[Tanaka and Takii(2016)]{TT2016}
S.~Tanaka and K.~Takii.
\newblock A faster branch-and-bound algorithm for the block relocation problem.
\newblock \emph{IEEE Transactions on Automation Science and Engineering},
  13\penalty0 (1):\penalty0 181--190, 2016.

\bibitem[Tanaka and Tierney(2018)]{TT2018}
S.~Tanaka and K.~Tierney.
\newblock Solving real-world sized container pre-marshalling problems with an
  iterative deepening branch-and-bound algorithm.
\newblock \emph{European Journal of Operational Research}, 264\penalty0
  (1):\penalty0 165 -- 180, 2018.

\bibitem[Tricoire et~al.(2018)Tricoire, Scagnetti, and Beham]{TSB2018}
F.~Tricoire, J.~Scagnetti, and A.~Beham.
\newblock New insights on the block relocation problem.
\newblock \emph{Computers \& Operations Research}, 89\penalty0 (Supplement
  C):\penalty0 127 -- 139, 2018.

\bibitem[Unluyurt and Aydin(2012)]{UA2012}
T.~Unluyurt and C.~Aydin.
\newblock Improved rehandling strategies for the container retrieval process.
\newblock \emph{Journal of Advanced Transportation}, 46\penalty0 (4):\penalty0
  378--393, 2012.

\bibitem[Wan et~al.(2009)Wan, Liu, and Tsai]{WLT2009}
Y.~Wan, J.~Liu, and P-C. Tsai.
\newblock The {A}ssignment of {S}torage {L}ocations to {C}ontainers for a
  {C}ontainer {S}tack.
\newblock \emph{Naval Research Logistics}, 56\penalty0 (8):\penalty0 699--713,
  2009.

\bibitem[Wu and Ting(2010)]{WT2010}
K.~C. Wu and C.~J. Ting.
\newblock A beam search algorithm for minimizing reshuffle operations at
  container yards.
\newblock \emph{International Conference on Logistics and Maritime Systems,
  Busan, Korea, September 15–17, 2010}, 2010.

\bibitem[Wu and Ting(2012)]{WT2012}
K.~C. Wu and C.~J. Ting.
\newblock Heuristic approaches for minimizing reshuffle operations at container
  yard.
\newblock \emph{Asia Pacific Industrial Engineering \& Management Systems
  Conference}, 2012.

\bibitem[Wu et~al.(2010)Wu, Ting, , and Hern\'andez]{WTH2010}
K.~C. Wu, C.~J. Ting, , and R.~Hern\'andez.
\newblock Applyng tabu search for minimizing reshuffle operations at container
  yards.
\newblock \emph{Journal of the Eastern Asia Society for Transportation
  Studies}, 8:\penalty0 2379--2393, 2010.

\bibitem[Zehendner and Feillet(2014)]{ZF2014}
E.~Zehendner and D.~Feillet.
\newblock A branch and price approach for the container relocation problem.
\newblock \emph{International Journal of Production Research}, 52\penalty0
  (24):\penalty0 7159--7176, 2014.

\bibitem[Zehendner et~al.(2015)Zehendner, Caserta, Feillet, Schwarze, and
  Vo\ss]{ZCFSV2015}
E.~Zehendner, M.~Caserta, D.~Feillet, S.~Schwarze, and S.~Vo\ss.
\newblock An improved mathematical formulation for the blocks relocation
  problem.
\newblock \emph{European Journal of Operational Research}, 245:\penalty0
  415--422, 2015.

\bibitem[Zehendner et~al.(2017)Zehendner, Feillet, and Jaillet]{ZFJ2017}
E.~Zehendner, D.~Feillet, and P.~Jaillet.
\newblock An algorithm with performance guarantee for the online container
  relocation problem.
\newblock \emph{European Journal of Operational Research}, 259\penalty0
  (1):\penalty0 48 -- 62, 2017.

\bibitem[Zhang(2000)]{Z2000}
C.~Zhang.
\newblock \emph{Resource Planning in Container Storage Yard}.
\newblock PhD thesis, Hong Kong University of Science and Technology, 2000.

\bibitem[Zhang et~al.(2016)Zhang, Liu, and Kopfer]{ZLK2016}
R.~Zhang, S.~Liu, and H.~Kopfer.
\newblock Tree search procedures for the blocks relocation problem with batch
  moves.
\newblock \emph{Flexible Services and Manufacturing}, 28\penalty0 (3):\penalty0
  397--424, Sep 2016.

\bibitem[Zhu et~al.(2012)Zhu, Qin, Lim, and Zhang]{ZQLZ2012}
W.~Zhu, H.~Qin, A.~Lim, and H.~Zhang.
\newblock Iterative {D}eepening {A}* {A}lgorithms for the {C}ontainer
  {R}elocation {P}roblem.
\newblock \emph{IEEE Transactions on Automation Science and Engineering},
  9:\penalty0 710--722, 2012.

\end{thebibliography}

\end{document}